\newtheorem{theorem}{Theorem}[section]
\newtheorem{lemma}[theorem]{Lemma}
\newtheorem{proposition}[theorem]{Proposition}
\newtheorem{definition}[theorem]{Definition}
\newtheorem{remark}[theorem]{Remark}
\newcommand{\bd}{\textbf{d}}
\newcommand{\be}{\textbf{e}}
\newcommand{\bu}{\textbf{u}}
\newcommand{\bx}{\textbf{x}}
\newcommand{\bA}{\textbf{A}}
\newcommand{\bB}{\textbf{B}}
\newcommand{\bD}{\textbf{D}}
\newcommand{\bE}{\textbf{E}}
\newcommand{\bF}{\textbf{F}}
\newcommand{\bH}{\textbf{H}}
\newcommand{\bI}{\textbf{I}}
\newcommand{\bX}{\textbf{X}}
\newcommand{\bY}{\textbf{Y}}
\newcommand{\bZ}{\textbf{Z}}
\newcommand{\bQ}{\textbf{Q}}
\newcommand{\bR}{\textbf{R}}
\newcommand{\bU}{\textbf{U}}
\newcommand{\bDelta}{\boldsymbol{\Delta}}
\newcommand{\bLam}{\boldsymbol{\Lambda}}
\newcommand{\cblue }{\color{black}}
\newcommand{\cn}{\color{black}}
\newcommand{\apx}[1]{\underline {#1}}
\DeclarePairedDelimiter\norm{\lVert}{\rVert\,}%
\DeclarePairedDelimiter\abs{\lvert}{\rvert\,}%
\journal{Computer Physics Communications}
\begin{document}

\begin{frontmatter}



\title{Residual-based Chebyshev filtered subspace iteration for Hermitian eigenvalue problems tolerant to inexact matrix-vector products} 


\author[inst1]{Nikhil Kodali}
\author[inst1]{Kartick Ramakrishnan}
\author[inst1]{Phani Motamarri}
\ead{phanim@iisc.ac.in}
\affiliation[inst1]{organization={Department of Computational and Data Sciences, Indian Institute of Science},
      addressline={CV Raman Road}, 
      city={Bengaluru},
      postcode={560012}, 
      state={Karnataka},
      country={India}}

\begin{abstract}

{\cblue Chebyshev Filtered Subspace Iteration (ChFSI) is widely used for computing a small subset of extremal eigenpairs from large matrices, particularly when the eigenpairs must be computed repeatedly as the system matrix evolves within an outer nonlinear iteration. These evolving subspaces are encountered, for instance, in Kohn-Sham density functional theory (DFT) used in quantum modelling of materials, subspace tracking in signal processing applications, and matrix completion problems. In this work, we propose R-ChFSI, a residual-based reformulation that recasts the Chebyshev polynomial recurrence in terms of residuals rather than eigenvector estimates, that achieves robust convergence even when matrix--vector products are computed inexactly. We derive convergence guarantees under such approximations and show that R-ChFSI can naturally leverage (i) the use of inexpensive approximate inverses for generalized eigenproblems of the form $\bA \bx = \lambda \bB \bx$, where exact factorizations of $\bB$ are prohibitively expensive, (ii)~low-precision arithmetic (FP32, TF32) for both standard and generalized eigenproblems, and (iii)~reduced-precision (BF16) inter-process communication in distributed sparse matrix--vector products. Controlled experiments on dense random matrices quantitatively verify the convergence bounds derived in this work and confirm the robustness of R-ChFSI to prescribed approximation errors for both standard and generalized eigenproblems. Large-scale experiments on finite-element discretized DFT generalized eigenproblems with up to 85 million grid points and 13,500 eigenpairs demonstrate that R-ChFSI achieves residual norms orders of magnitude below those of standard ChFSI when approximate inverses are employed, and reliably meets target tolerances of $10^{-8}$ even when employing reduced precision, yielding filtering speedups of up to $2.7{\times}$ ($2.1{\times}$ for the full eigensolver) on GPU accelerators.\cn}

\end{abstract}



\begin{keyword}


Hermitian generalized eigenproblems, Chebyshev filtered subspace iteration, finite-element basis, Kohn-Sham density functional theory 
\end{keyword}

\end{frontmatter}



\section{Introduction}
Large Hermitian eigenvalue problems play a central role in many areas of computational physics. They arise from the discretization of fundamental governing equations in quantum mechanics, electronic structure theory, fluid dynamics, plasma physics, wave propagation, electromagnetics, and elastodynamics. In such applications, the extremal eigenmodes often correspond to physically meaningful states—ground and low-lying excited states in quantum systems, dominant instability modes in fluid or plasma flows, or fundamental vibration frequencies in elastic media—which makes the efficient computation of only a small subset of the spectrum particularly important. The solution strategies for these eigenproblems often rely on fully iterative methods based on iterative orthogonal projection approaches. In these approaches, the large matrix is orthogonally projected onto a carefully constructed smaller subspace rich in the wanted eigenvectors (Rayleigh-Ritz step), followed by subspace diagonalisation of the projected matrix and a subspace rotation step to recover the desired orthogonal eigenvector estimates of the original Hermitian matrix. Popular iterative approaches include Davidson~\cite{crouzeixDavidsonMethod1994,davidsonIterativeCalculationFew1975}, Generalized-Davidson~\cite{morganGeneralizationsDavidsonsMethod1986}, Jacobi-Davidson~\cite{hochstenbachJacobiDavidsonMethod2006}, Chebyshev-filtered subspace iteration (ChFSI) approach~\cite{zhouSelfconsistentfieldCalculationsUsing2006}, LOBPCG~\cite{knyazevOptimalPreconditionedEigensolver2001}, and PPCG~\cite{vecharynskiProjectedPreconditionedConjugate2015}. Another key class of iterative techniques is based on Krylov subspace methods, namely the Arnoldi method~\cite{arnoldiPrincipleMinimizedIterations1951}, Lanczos methods~\cite{lanczosIterationMethodSolution1950} and their important variants, including implicit restart Arnoldi methods~\cite{sorensenImplicitApplicationPolynomial1992}, Krylov-Schur method~\cite{stewartKrylovSchurAlgorithmLarge2002} and block-Krylov methods~\cite{saadNumericalMethodsLarge2011}. Our focus in this work involves the solution of large Hermitian eigenproblems using the ChFSI approach.

Chebyshev filtered subspace iteration (ChFSI) has emerged as a robust alternative to Krylov subspace eigensolvers for extracting a small subset of extremal eigenpairs from large Hermitian matrices—particularly in scenarios where these eigenpairs must be computed repeatedly as the system matrix evolves within an outer iteration. A prominent example is the nonlinear eigenvalue problem that arises in Kohn-Sham density functional theory (DFT)~\cite{kohnSelfconsistentEquationsIncluding1965}, where one is interested in the lowest eigenpairs of a Hamiltonian that depends nonlinearly on the eigenvectors. This problem is typically solved through a Self-Consistent Field (SCF) procedure, in which the Hamiltonian is updated at each iteration and a small set of extremal eigenpairs must be recomputed repeatedly as the system approaches self-consistency. ChFSI relies on a Chebyshev polynomial filtering procedure that constructs a subspace rich in the desired eigenvectors, exploiting the fast growth property of Chebyshev polynomials outside the interval [-1,1], followed by a Rayleigh-Ritz step of projection and subspace diagonalization. Approaches based on ChFSI have become quite popular recently to solve both standard and generalized eigenproblems arising in DFT~\cite{lohChebyshevFilteringLanczos1984,zhouParallelSelfconsistentfieldCalculations2006,motamarriHigherorderAdaptiveFiniteelement2013,levittParallelEigensolversPlanewave2015,wuAdvancingDistributedMultiGPU2023}. Although variants of ChFSI approaches combined with the Davidson method~\cite{zhouChebyshevDavidsonAlgorithm2007,miaoChebyshevDavidsonMethod2020,miaoFlexibleBlockChebyshevDavidson2023,wangNewSubspaceIteration2025} have been proposed, the ChFSI approach has become the preferred choice in large-scale electronic structure codes for solving the underlying nonlinear eigenvalue problem~\cite{dasDFTFE10Massively2022,ghoshSPARCAccurateEfficient2017a,zhouChebyshevfilteredSubspaceIteration2014,fengMassivelyParallelImplementation2024,banerjeeChebyshevPolynomialFiltered2016} due to its scalability, ability to accommodate evolving subspaces and computational efficiency~\cite{dasFastScalableAccurate2019,dasLargeScaleMaterialsModeling2023}. {\cblue We note that planewave-based DFT codes often employ preconditioned iterative solvers (e.g., preconditioned conjugate gradient or Davidson-type methods) that exploit the availability of effective kinetic-energy-based preconditioners in the Fourier basis. However, for real-space discretizations (finite-element, finite-difference) where constructing such spectrally adapted preconditioners is less straightforward, ChFSI has proven to be the most effective approach for large-scale calculations. Additionally, ChFSI is memory efficient compared to Davidson-type approaches and further offers an attractive alternative to preconditioned-type conjugate gradient approaches for problems where good preconditioners are unavailable.\cn} The current work fills an important gap in the ChFSI approach by developing a residual-based approach to ChFSI that is tolerant to inexact matrix-vector products during the subspace construction step. Consequently, we argue that the method is well-suited for solving generalized eigenproblems efficiently and additionally low-precision arithmetic can be leveraged in the light of recent changes in modern heterogeneous computing architectures, significantly improving computational efficiency.


Modern hardware architectures have undergone substantial modifications in recent years due to the computational requirements of machine learning (ML) and artificial intelligence (AI) training. Owing to their high computational demands, these domains have gravitated towards the use of low-precision arithmetic for training and inferencing. In response to this demand, hardware manufacturers have been enhancing support for low-precision floating-point formats, such as tensorfloat32 and bfloat16, enabling significantly faster throughput and substantial performance improvements\footnote{NVIDIA's Blackwell GPUs, designed for AI/ML applications, demonstrate a notable decrease in peak double-precision (FP64) floating-point performance~\cite{HGXAISupercomputing} compared to the previous Hopper architecture.}. These architectural changes underscore the necessity to modify scientific computing algorithms to efficiently utilize low-precision processes without compromising accuracy~\cite{dongarraHardwareTrendsImpacting2024,kashiMixedprecisionNumericsScientific2025}. For sparse eigensolvers, the ideas of mixed precision preconditioning for the LOBPCG eigensolver~\cite{kressnerMixedPrecisionLOBPCG2023} and mixed precision orthogonalization and Rayleigh-Ritz have been explored for the LOBPCG and ChFSI eigensolvers~\cite{kressnerMixedPrecisionLOBPCG2023,dasDFTFE10Massively2022,motamarriDFTFEMassivelyParallel2020}. We note that in these works the evaluation of matrix-vector products involved in the subspace construction still need to be performed in the higher precision arithmetic. The development of robust iterative eigensolver algorithms to accommodate low precision arithmetic in matrix-vector multiplications becomes particularly important in light of the evolving hardware landscape.

In this work, we introduce a residual-based reformulation of the Chebyshev filtered subspace iteration (ChFSI) method~\cite{saadChebyshevAccelerationTechniques1984,zhouParallelSelfconsistentfieldCalculations2006,zhouSelfconsistentfieldCalculationsUsing2006,zhouChebyshevfilteredSubspaceIteration2014}, referred to as the R-ChFSI method, for solving large-scale Hermitian eigenvalue problems. The key novelty of R-ChFSI lies in its ability to accommodate inexact matrix-vector products in the subspace construction step while preserving convergence properties. Most importantly, for generalized eigenvalue problems of the form $\bA\bx=\lambda\bB\bx$, we demonstrate that the method naturally admits low-cost approximations of $\bB^{-1}$, allowing ChFSI to act on $\bB^{-1}\bA$ and solve large generalized eigenproblems without degrading convergence, thereby avoiding 
expensive matrix factorizations and iterative solvers that are otherwise typically required for such problems~\cite{levittParallelEigensolversPlanewave2015}. The ability of R-ChFSI to be robust to approximations in matrix-vector multiplications makes the proposed method particularly relevant for modern hardware architectures that favor low-precision arithmetic, significantly improving computational efficiency. The core of our approach is a reformulated recurrence relation that modifies the standard ChFSI update step to operate on residuals rather than the guess of the eigenvectors. We then provide a mathematical justification demonstrating that this reformulation reduces the numerical error in the Chebyshev filtered subspace construction while employing inexact matrix-vector products compared to the traditional ChFSI recurrence relation. Our analysis establishes that the proposed R-ChFSI method effectively controls error propagation in the Chebyshev filtering process, resulting in more reliable convergence.

Through extensive benchmarking on large-scale sparse generalized eigenvalue problems arising in \emph{ab initio} calculations using finite-element (FE) discretization of DFT, where we employ a diagonal approximation to the FE basis overlap matrix inverse, R-ChFSI attains a residual tolerance that is orders of magnitude lower than standard ChFSI, demonstrating superior robustness to approximation errors. The improved convergence for generalized eigenvalue problems is particularly noteworthy, as it highlights the ability of R-ChFSI to accommodate approximate inverses while maintaining accuracy. In contrast, standard ChFSI is more sensitive to such approximations, often leading to a loss in accuracy or requiring significantly higher computational costs to achieve similar residual tolerances. This advantage of R-ChFSI is especially relevant in DFT calculations, where generalized eigenvalue problems frequently arise~\cite{levittParallelEigensolversPlanewave2015,ramakrishnanFastScalableFiniteelement2025,dasDFTFE10Massively2022}, and high computational costs in the subspace construction constitute a significant bottleneck. Furthermore, R-ChFSI enables the use of lower-precision arithmetic in the filtering step, further enhancing the performance gains. In our benchmark studies, using TF32 arithmetic in Intel Data Center GPU Max Series accelerators deployed on the Aurora supercomputing system, we have obtained speedups of up to 2.3x for the filtering step and 1.9x for the complete eigensolve to reach the desired tolerance. Further, we have also employed reduced precision, BF16, for the nearest-neighbour MPI communication arising in sparse matrix--vector multiplications to enhance performance, and we obtain speedups of up to 2.7x for the filtering step and 2.1x for the complete eigensolve to reach the desired tolerance. By reducing dependence on exact matrix factorizations and high-precision arithmetic, R-ChFSI enables efficient exploitation of modern hardware accelerators, making large-scale eigenvalue computations more feasible in high-performance computing environments.

The remainder of this article is organized as follows: Section 2 outlines the key steps of the Chebyshev filtered subspace iteration procedure and subsequently analyzes the convergence properties of ChFSI in terms of how the maximum principal angle between the current subspace and the target eigenspace evolves during the iterations. Section 3 begins by analyzing the convergence of ChFSI 
when the subspace is constructed approximately due to inexact matrix-vector products. Subsequently, this section introduces the proposed residual-based Chebyshev filtered subspace iteration method (R-ChFSI) for generalized eigenvalue problems. This section discusses the convergence of R-ChFSI with approximate matrix products and demonstrates mathematically that the R-ChFSI method can converge even when the traditional ChFSI method fails. Section 4 presents a comprehensive evaluation of the proposed R-ChFSI method in terms of accuracy and computational efficiency on GPU architectures that admit low precision arithmetic. It compares R-ChFSI to the traditional ChFSI approach for solving real symmetric and complex Hermitian eigenproblems in the case of generalized eigenproblems while using inexact matrix-vector products.

\section{Mathematical Background}
\label{sec:background}
The ChFSI \cite{saadChebyshevAccelerationTechniques1984,saadNumericalMethodsLarge2011,zhouParallelSelfconsistentfieldCalculations2006,zhouSelfconsistentfieldCalculationsUsing2006,motamarriHigherorderAdaptiveFiniteelement2013,zhouChebyshevfilteredSubspaceIteration2014} approach for solving the desired eigenpairs belongs to the category of iterative orthogonal projection methods and is one of the widely used strategies to compute the smallest $n$ eigenvalues and their corresponding eigenvectors~\cite{dektorInexactSubspaceProjection2025,dasDFTFE10Massively2022,motamarriDFTFEMassivelyParallel2020,zhangSPARCV200Spinorbit2024,ghoshSPARCAccurateEfficient2017a,xuSPARCSimulationPackage2021,kronikPARSECPseudopotentialAlgorithm2006,liouScalableImplementationPolynomial2020,winkelmannChASE2019,michaud-riouxRESCURealSpace2016,doganSolvingElectronicStructure2023,levittParallelEigensolversPlanewave2015}. To describe the proposed eigensolver strategy based on the ChFSI approach, in the current work, we consider the Hermitian generalized eigenvalue problem of the form 
\begin{equation}
\bA\bu_i=\lambda_i\bB\bu_i \;\; \text{where} \;\; \bA\in \mathbb{C}^{m\times m},\; \bB\in \mathbb{C}^{m\times m}.\label{eqn:ghep}
\end{equation}
where $\bA$ and $\bB$ are Hermitian matrices with $\bB$ being a positive-definite matrix. In addition, $\lambda_i\in \mathbb{R}$ and $\bu_i\in\mathbb{C}^{m}:\forall i=1,\dots,n$ denote the eigenvalue-eigenvector pairs corresponding to the smallest $n$ eigenvalues. \Cref{eqn:ghep} reduces to a standard eigenvalue problem when $\bB=\bI$ with $\bI$ denoting the $m\times m$ identity matrix. Without loss of generality, we assume that the eigenvalues are ordered as $\lambda_1\leq\lambda_2\leq\dots\leq\lambda_n\leq\lambda_{n+1}\leq\dots\leq\lambda_m$. For completeness and to introduce notations, we now provide a brief overview of the ChFSI algorithm, traditionally used to solve the eigenvalue problem corresponding to \cref{eqn:ghep}.

\subsection{Chebyshev filtered subspace iteration and convergence properties}

ChFSI leverages the properties of Chebyshev polynomials to efficiently filter out the components of the unwanted eigenvectors (corresponding to the remaining $m-n$ largest eigenvalues), thus enriching the trial subspace with the desired eigenvectors. To this end, we define the Chebyshev polynomial of degree $k$ as $T_k(x)$ and note that these polynomials exhibit the fastest growth in their magnitude when $\abs{x}>1$ while remaining bounded between $[-1,1]$ when $\abs{x}\leq 1$. To exploit this fast growth property of $T_k(x)$, an affine transformation that maps the largest $m-n$ eigenvalues to $[-1,1]$ is defined, and consequently, the desired smallest $n$ eigenvalues get mapped to values lying in $(-\infty,-1]$. To this end, we define the center of the unwanted spectrum as $c=(\lambda_{m}+\lambda_{n+1})/2$ and the half-width of the unwanted spectrum as $e=(\lambda_{m}-\lambda_{n+1})/2$ and hence the required affine transformation can now be represented as $\mathcal{L}(x)= (x-c)/e$. Standard implementations of ChFSI also scale the Chebyshev polynomials to prevent overflow~\cite{saadChebyshevAccelerationTechniques1984,saadNumericalMethodsLarge2011,zhouParallelSelfconsistentfieldCalculations2006,zhouSelfconsistentfieldCalculationsUsing2006,zhouChebyshevfilteredSubspaceIteration2014,winkelmannChASE2019} and consequently the scaled and shifted Chebyshev polynomials are defined as $C_k(x)= T_k(\mathcal{L}(x))/T_k(\mathcal{L}(a_{low}))$ where $a_{low}\leq\lambda_1$. In practice, the exact eigenvalues are unknown, and the filter is parameterised by estimates: a lower bound $\lambda_{\min}\leq\lambda_1$ (taking the role of~$a_{low}$), an upper bound $\lambda_{\max}\geq\lambda_m$ on the spectrum, and a threshold $\lambda_T$ separating the wanted from the unwanted eigenvalues (so that $\lambda_T\approx\lambda_{n+1}$). In terms of these estimates the affine-transformation parameters become $c=(\lambda_{\max}+\lambda_T)/2$ and $e=(\lambda_{\max}-\lambda_T)/2$. We note that the ChFSI procedure for solving the eigenproblem in \cref{eqn:ghep} for the case of sparse matrices\footnote{For dense matrices the common practice is to employ Cholesky factorization of $\bB$ to convert the generalized eigenvalue problem to the standard eigenvalue problem.} is usually devised with the matrix $\bH=\bB^{-1}\bA$ that has the same eigenpairs as $\bA\bu_j=\lambda_j\bB\bu_j$.
This procedure for computing the smallest eigenpairs of \cref{eqn:ghep} up to a specified tolerance $\tau$ on the eigenproblem residual norm, is summarized in \cref{alg:ChFSI}. We note that in \cref{alg:ChFSI} we do not explicitly perform an orthogonalization step for the filtered subspace and instead opt to perform the Rayleigh-Ritz step for a non-orthogonal basis, which ensures that the resulting Ritz vectors are $\bB$-orthonormal. This choice was made for both ease of analysis and computational efficiency.

{\cblue The key computational kernel of ChFSI is the Chebyshev polynomial filter, which constructs the filtered subspace $\bY_p^{(i)}=C_p(\bH)\bX^{(i)}$ via the three-term recurrence
\begin{align}
  \bY_{k+1}^{(i)}=\frac{2\sigma_{k+1}}{e}\bH\bY_k^{(i)}-\frac{2\sigma_{k+1}c}{e}\bY_k^{(i)}-\sigma_k\sigma_{k+1}\bY_{k-1}^{(i)}\label{eqn:ChFSIrecurrence}
\end{align}
where $\bY_0^{(i)}=\bX^{(i)}$, $\bY_1^{(i)}=\frac{\sigma_1}{e}(\bH-c\bI)\bX^{(i)}$, and the scaling coefficients satisfy $\sigma_{k+1}=1/\left(\frac{2}{\sigma_1} - \sigma_k\right)$ with $\sigma_1=e/(a_{low}-c)$. Here $\bY_k^{(i)}=C_k(\bH)\bX^{(i)}$ for $k=0,1,\dots,p$. The filtering procedure is detailed in \cref{alg:chebFilt}. Following the filtering step, a Rayleigh--Ritz projection is performed. Denoting the conjugate transpose by~$^\dagger$, one solves the reduced $n\times n$ dense generalized eigenvalue problem 
\[
{\bY_p^{(i)}}^\dagger\bA\bY_p^{(i)}\bE^{(i+1)}={\bY_p^{(i)}}^\dagger\bB\bY_p^{(i)}\bE^{(i+1)}\bLam^{(i+1)},
\]
where $\bE^{(i+1)}$ is the eigenvector matrix and $\bLam^{(i+1)}=\mathrm{diag}(\epsilon_1^{(i+1)},\dots,\epsilon_n^{(i+1)})$ contains the Ritz values. The updated Ritz vectors are $\bX^{(i+1)}=\bY_p^{(i)}\bE^{(i+1)}$, which are $\bB$-orthonormal by construction. These two steps are repeated until the residual norm $r_j^{(i+1)}=\norm{\bA\bx_j^{(i+1)}-\epsilon_j^{(i+1)}\bB\bx_j^{(i+1)}}$ falls below a desired tolerance~$\tau$ for all $j=1,\dots,n$. The overall procedure is summarized in \cref{alg:ChFSI}.\cn}

\begin{algorithm}[!h]
\cblue
\caption{Subspace Iteration accelerated using Chebyshev polynomial of degree $p$ (ChFSI)}
\label{alg:ChFSI}
\begin{flushleft}
\textbf{INPUTS:} Hermitian matrix $\bA\in\mathbb{C}^{m\times m}$, SPD matrix $\bB\in\mathbb{C}^{m\times m}$, number of wanted eigenpairs $n$, Chebyshev polynomial degree $p$, spectral bounds $\lambda_{\min},\lambda_{\max},\lambda_T$, residual tolerance~$\tau$.\\
\textbf{OUTPUT:} Approximate eigenvectors $\bX^{(i+1)}\in\mathbb{C}^{m\times n}$ and eigenvalues $\bLam^{(i+1)}\in\mathbb{R}^{n\times n}$.
\end{flushleft}
\begin{algorithmic}
\STATE{\textbf{Step 0.} Choose an initial guess $\bX^{(0)} = \begin{bmatrix}
  \bx_1^{(0)} & \bx_2^{(0)} &\dots&\bx_n^{(0)}\end{bmatrix}\in\mathbb{C}^{m\times n}$.}
\WHILE{$\max_j\norm{\bA\bx^{(i+1)}_j-\epsilon_j^{(i+1)}\bB\bx^{(i+1)}_j} \geq \tau$}
\STATE{\textbf{Step 1.} \emph{Chebyshev filtered subspace construction}: Compute $\bY_p^{(i)}=C_p(\bH)\bX^{(i)}$ using the recurrence relation~\cref{eqn:ChFSIrecurrence} (see \cref{alg:chebFilt}).}
\STATE{\textbf{Step 2.} \emph{Rayleigh--Ritz projection}: Solve ${\bY_p^{(i)}}^\dagger\bA\bY_p^{(i)}\bE^{(i+1)}={\bY_p^{(i)}}^\dagger\bB\bY_p^{(i)}\bE^{(i+1)}\bLam^{(i+1)}$ for the Ritz values $\bLam^{(i+1)}$ and Ritz vectors $\bE^{(i+1)}$.}
\STATE{\textbf{Step 3.} \emph{Subspace rotation}: Set $\bX^{(i+1)}\gets\bY_p^{(i)}\bE^{(i+1)}$.}
\ENDWHILE
\end{algorithmic}
\cn\end{algorithm}
\begin{algorithm}[!htbp]
\cblue
\caption{Chebyshev filtering procedure for generalized Hermitian eigenvalue problems}
\label{alg:chebFilt}
\begin{flushleft}
    \textbf{INPUTS:} Chebyshev polynomial order $p$, estimates of the bounds of the eigenspectrum $\lambda_{max},\lambda_{min}$, estimate of the upper bound of the wanted spectrum $\lambda_{T}$ and the initial guess of eigenvectors $\bX^{(i)}$\\
    \textbf{OUTPUT:} The filtered subspace $\bY_p^{(i)}$\\
    \textbf{TEMPORARY VARIABLES:} $\bX$, $\bY$
\end{flushleft}

\begin{algorithmic}
\STATE{$e \gets \frac{\lambda_{max}-\lambda_{T}}{2}$; $c \gets \frac{\lambda_{max}+\lambda_{T}}{2}$; $\sigma \gets \frac{e}{\lambda_{min}-c}$; $\sigma_1 \gets \sigma$; $\gamma \gets \frac{2}{\sigma_1}$}
\STATE{$\bX \gets \bX^{(i)}$; $\bY \gets \frac{\sigma_1}{e}(\bH-c\bI)\bX^{(i)}$}
\FOR{$k\gets 2$ to $p$}
\STATE{$\sigma_2\gets \frac{1}{\gamma-\sigma}$}
\STATE{$\bX\gets\frac{2\sigma_2}{e}\bH\bY-\frac{2\sigma_2}{e}c\bY-\sigma\sigma_2\bX$}
\STATE{swap$\left(\bX,\bY\right)$; $\sigma = \sigma_2$}
\ENDFOR
\RETURN \upshape $\bY$
\end{algorithmic}
\cn\end{algorithm}

\subsubsection{Convergence analysis of ChFSI}
To ensure completeness, we now provide a mathematical justification for the convergence of ChFSI. {\cblue The convergence of subspace iteration accelerated by polynomial filters, including Chebyshev polynomials, has been studied in the context of general subspace iteration methods (see, e.g., \cite{watkinsMatrixEigenvalueProblem2007,saadNumericalMethodsLarge2011}). The analysis presented here follows the framework of Watkins~\cite{watkinsMatrixEigenvalueProblem2007}, adapted to explicitly track the maximum principal angle between the Chebyshev polynomial filtered subspace and the target eigenspace, and extended in \cref{sec:apprxSubspaceConstr} to account for approximation errors in the matrix--vector products---a setting not addressed in prior works.\cn} Note that $\norm{\cdot}$ implies the 2-norm for vectors and vector-induced matrix 2-norm for matrices (spectral norm) throughout this work. In order to analyze convergence, we first define the maximum principal angle between two subspaces~\cite{chenSpectralMethodsData2021,zhuAnglesSubspacesTheir2013,watkinsMatrixEigenvalueProblem2007}.
\begin{definition}
\cblue Let $\mathcal{X},\mathcal{Y}\subset\mathbb{C}^{m}$ be subspaces of dimensions $p$ and $q$. 
Let $\normalfont \bX\in\mathbb{C}^{m\times p}$ have orthonormal columns spanning $\mathcal{X}$ and 
$\normalfont\bX_\perp\in\mathbb{C}^{m\times (m-p)}$ have orthonormal columns spanning $\normalfont\mathcal{X}^\perp$. 
Let $\normalfont\bY\in\mathbb{C}^{m\times q}$ have orthogonal columns spanning $\mathcal{Y}$. 
The maximum principal angle $\angle(\mathcal{X},\mathcal{Y})\in[0,\tfrac{\pi}{2}]$ is defined by
 \begin{align*}
\normalfont\tan\angle(\mathcal{X},\mathcal{Y})=\normalfont
\begin{cases}
\norm{\bX_\perp^{\dagger}\bY\left(\bX^{\dagger}\bY\right)^{+}}, 
& \text{if }\mathrm{rank}(\bX^{\dagger}\bY)=q, \\[6pt]
\frac{\pi}{2}, & \text{otherwise},
\end{cases}
 \end{align*}
where $\dagger$ denotes the conjugate transpose, $+$ the Moore--Penrose inverse.\cn\label{def:angle}
\end{definition}
For the sake of convenience in the subsequent convergence analysis, we define the Hermitian matrix $\hat{\bH}=\bB^{-\frac{1}{2}}\bA\bB^{-\frac{1}{2}}$ allowing us to rewrite the generalized eigenproblem $\bA\bU=\bB\bU\bLam$, as a Hermitian standard eigenvalue problem $\hat{\bH}\hat{\bU}=\hat{\bU}\bLam$, where $\hat{\bU}=\bB^{\frac{1}{2}}\bU$ denotes the $m\times m$ unitary matrix with columns as eigenvectors of $\hat{\bH}$ and $\bLam$ denotes the diagonal matrix comprising eigenvalues of $\hat{\bH}$. Consider the partitioning of $\hat{\bU}$ as $\begin{bmatrix}\hat{\bU}_1 & \hat{\bU}_2\end{bmatrix}$ where $\hat{\bU}_1$ is the $m\times n$ matrix whose columns are the eigenvectors corresponding to the lowest $n$ eigenvalues and $\hat{\bU}_2$ is the $m\times(m-n)$ matrix whose columns are the rest of the eigenvectors. We define the wanted eigenspace corresponding to the $n$ smallest eigenvectors of $\hat{\bH}$ as $\mathcal{S}=\mathcal{R}(\hat{\bU}_1)$ and the unwanted eigenspace as $\mathcal{S}_\perp=\mathcal{R}(\hat{\bU}_2)$. The trial subspace corresponding to $\hat{\bH}$ at the beginning of $i^{th}$ iteration is denoted as $\mathcal{S}^{(i)}=\mathcal{R}(\hat{\bX}^{(i)})$ where $\hat{\bX}^{(i)}=\bB^{\frac{1}{2}}\bX^{(i)}$. Further, the filtered subspace obtained at the end of the $i^{th}$ iteration is denoted as $\mathcal{S}^{(i+1)}=\mathcal{R}({\hat{\bY}_{p}}^{(i)})$ where $\hat{\bY}^{(i)}_p=\bB^{\frac{1}{2}}\bY^{(i)}_p$. We note that using the relation $\bB^{\frac{1}{2}}C_p(\bH)\bB^{-\frac{1}{2}}=C_p(\hat{\bH})$ and $\bY_p^{(i)}=C_p(\bH)\bX^{(i)}$ from \cref{alg:ChFSI}, we can conclude that $\hat{\bY}^{(i)}_p = C_p(\hat{\bH}) \hat{\bX}^{(i)}$, i.e., $\mathcal{S}^{(i+1)}=C_p(\hat{\bH})\mathcal{S}^{(i)}$. 
\begin{theorem}
 \label{thm:accurateConv}
 For an n-dimensional space $\mathcal{S}^{(i)}$ satisfying $\mathcal{S}^{(i)}\cap\mathcal{S}_\perp=\left\{0\right\}$ where $\mathcal{S}_\perp$ is the orthogonal complement of $\mathcal{S}$ and $\normalfont\mathcal{S}^{(i+1)}=C_p(\hat{\bH})\mathcal{S}^{(i)}$, we have the following inequality
 \begin{equation*}
 \normalfont\tan{\angle(\mathcal{S}^{(i+1)},\mathcal{S})}\leq\abs*{\frac{C_p(\lambda_{n+1})}{C_p(\lambda_n)}}\tan{\angle(\mathcal{S}^{(i)},\mathcal{S})}
 \end{equation*}
\end{theorem}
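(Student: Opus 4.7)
The plan is to pass through the eigenbasis of $\hat{\bH}$ and reduce the claim to submultiplicativity of spectral norms involving diagonal Chebyshev evaluations. First, I would fix an orthonormal basis $\hat{\bX}^{(i)}$ of $\mathcal{S}^{(i)}$ and expand it as $\hat{\bX}^{(i)} = \hat{\bU}_1\bG_1 + \hat{\bU}_2\bG_2$ with $\bG_j = \hat{\bU}_j^\dagger\hat{\bX}^{(i)}$. The hypothesis $\mathcal{S}^{(i)}\cap\mathcal{S}_\perp=\{0\}$ forces $\bG_1\in\mathbb{C}^{n\times n}$ to be nonsingular. Invoking the CS decomposition of $[\bG_1^\dagger\ \bG_2^\dagger]^\dagger$ together with \cref{def:angle}, the singular values of $\bG_2$ are the sines of the principal angles between $\mathcal{S}^{(i)}$ and $\mathcal{S}$, and those of $\bG_1$ are the corresponding cosines; this yields the maximum-angle identity $\tan\angle(\mathcal{S}^{(i)},\mathcal{S}) = \norm{\bG_2\bG_1^{-1}}$.

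Next, because $C_p(\hat{\bH})$ is diagonal in the eigenbasis, applying it to the basis gives
\begin{equation*}
C_p(\hat{\bH})\hat{\bX}^{(i)} = \hat{\bU}_1 C_p(\bLam_1)\bG_1 + \hat{\bU}_2 C_p(\bLam_2)\bG_2,
\end{equation*}
where $\bLam_1$ and $\bLam_2$ are the diagonal matrices collecting the wanted and unwanted eigenvalues. The columns span $\mathcal{S}^{(i+1)}$ but are not orthonormal; however, the tangent-of-angle expression is invariant under right multiplication by an invertible matrix, since such an action preserves the underlying subspace. Performing a thin QR factorisation $C_p(\hat{\bH})\hat{\bX}^{(i)} = \bQ\bR$ and tracking the $\hat{\bU}_j$-components through the cancelling $\bR\bR^{-1}$ yields the basis-free formula
\begin{equation*}
\tan\angle(\mathcal{S}^{(i+1)},\mathcal{S}) = \norm{C_p(\bLam_2)\,\bG_2\bG_1^{-1}\,C_p(\bLam_1)^{-1}},
\end{equation*}
which requires that $C_p(\bLam_1)$ be invertible; this is ensured by the filter design, in which $\mathcal{L}(\lambda_n)<-1$ so that no root of $T_p\circ\mathcal{L}$ coincides with any wanted $\lambda_k$.

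The conclusion then follows by submultiplicativity,
\begin{equation*}
\tan\angle(\mathcal{S}^{(i+1)},\mathcal{S}) \leq \norm{C_p(\bLam_2)}\,\norm{\bG_2\bG_1^{-1}}\,\norm{C_p(\bLam_1)^{-1}},
\end{equation*}
combined with the Chebyshev monotonicity estimates. Since both Chebyshev factors are diagonal, $\norm{C_p(\bLam_2)} = \max_{k>n}\abs{C_p(\lambda_k)}$ and $\norm{C_p(\bLam_1)^{-1}} = 1/\min_{k\leq n}\abs{C_p(\lambda_k)}$. On $[\lambda_{n+1},\lambda_m]$ the map $\mathcal{L}$ lands in $[-1,1]$ where $\abs{T_p}\leq 1$ attains the extremal value $1$ at the endpoints, so the first maximum is bounded by $\abs{C_p(\lambda_{n+1})}$; on $[\lambda_1,\lambda_n]$ the argument $\mathcal{L}(\lambda)<-1$ lies in the regime where $\abs{T_p\circ\mathcal{L}}$ is strictly increasing as $\lambda$ decreases, so the minimum is $\abs{C_p(\lambda_n)}$. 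Substituting the two bounds produces the stated inequality.

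The step I expect to be the main obstacle is making the transition from orthonormal to non-orthonormal bases in the tangent identity fully rigorous; the QR-based invariance argument handles it, but one has to be careful in showing that the $\hat{\bU}_1$-block of any basis of $\mathcal{S}^{(i+1)}$ remains nonsingular (equivalently, that $\mathcal{S}^{(i+1)}\cap\mathcal{S}_\perp=\{0\}$), which follows from the invertibility of $C_p(\bLam_1)$ and of $\bG_1$. Once that is in place, the Chebyshev-polynomial estimates are routine consequences of the definition of $C_p$ and of the affine map $\mathcal{L}$.
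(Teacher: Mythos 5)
Your proposal is correct and follows essentially the same route as the paper's proof: partition the basis in the eigenbasis of $\hat{\bH}$, use the identity $\tan\angle(\mathcal{S}^{(i)},\mathcal{S})=\norm{\bG_2\bG_1^{-1}}$, let $C_p(\hat{\bH})$ act diagonally, and finish with submultiplicativity and the Chebyshev growth estimates. The only difference is that you justify the tangent identity and its invariance under non-orthonormal bases via the CS decomposition and a QR argument, whereas the paper delegates this to \cref{lem:tantheta} (stated for arbitrary full-rank bases, with citations); your version also spells out the endpoint/monotonicity estimates for $\norm{C_p(\bLam_2)}$ and $\norm{C_p(\bLam_1)^{-1}}$ that the paper leaves implicit.
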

\begin{proof}
Note that $\hat{\bX}^{(i)}$ is an orthogonal matrix by virtue of implicit $\bB$ orthogonalization of $\bX^{(i)}$ in the Rayleigh-Ritz step. Define the $n\times n$ matrix $\hat{\bZ}_1^{(i)}=\hat{\bU}_1^\dagger\hat{\bX}^{(i)}$ and the $(m-n)\times n$ matrix $\hat{\bZ}_2^{(i)}=\hat{\bU}_2^\dagger\hat{\bX}^{(i)}$. Partitioning the matrix $\hat{\bU}^\dagger\hat{\bX}^{(i)}$, we have $\hat{\bU}^\dagger\hat{\bX}^{(i)}=\Bigl[\begin{smallmatrix}
  \hat{\bZ}_1^{(i)}\\\hat{\bZ}_2^{(i)}
 \end{smallmatrix}\Bigr]$. We note that the assumption $\mathcal{S}^{(i)}\cap\mathcal{S}_\perp=\left\{0\right\}$ ensures that $\hat{\bZ}_1^{(i)}$ is invertible. {\cblue This assumption simply requires that the current trial subspace $\mathcal{S}^{(i)}$ is not entirely contained in the unwanted eigenspace; in other words, the trial subspace must retain a nonzero component along every direction of the target eigenspace $\mathcal{S}$. For a randomly initialized subspace this holds with probability~one, and the Chebyshev filter strengthens the wanted component at each iteration, so the condition is maintained throughout the algorithm.\cn}
Using the maximum principal angle given by \cref{def:angle}, we have $\tan{\angle(\mathcal{S}^{(i)},\mathcal{S})}=\norm{\hat{\bZ}_2^{(i)}{{}\hat{\bZ}_1^{(i)}}^{+}}$. We can now write
\begin{align*}
 \mathcal{S}^{(i+1)}=\mathcal{R}(\hat{\bX}^{(i+1)})=\mathcal{R}(C_p(\hat{\bH})\hat{\bX}^{(i)}\bE^{(i+1)})
\end{align*}
We can now write $\hat{\bZ}_1^{(i+1)}=\hat{\bU}_1^\dagger\hat{\bX}^{(i+1)}$ and the $(m-n)\times n$ matrix $\hat{\bZ}_2^{(i+1)}=\hat{\bU}_2^\dagger\hat{\bX}^{(i+1)}$
\begin{align*}
 \tan{\angle(\mathcal{S}^{(i+1)},\mathcal{S})}\leq&\norm{C_p(\bLam_2)}\norm{\hat{\bZ}_2^{(i)}{{}\hat{\bZ}_1^{(i)}}^{+}}\norm{C_p(\bLam_1)^{-1}}=\frac{\max_{j\geq n+1}\abs{ C_p(\lambda_{j})}}{\min_{j\leq n}\abs{C_p(\lambda_j)}}\tan{\angle(\mathcal{S}^{(i)},\mathcal{S})}\\=&\abs*{\frac{C_p(\lambda_{n+1})}{C_p(\lambda_n)}}\tan{\angle(\mathcal{S}^{(i)},\mathcal{S})}
\end{align*}
\end{proof} 
\Cref{thm:accurateConv} demonstrates as the iterations progress, the Chebyshev filtered subspace approaches the wanted eigenspace.
\section{Approximate Chebyshev Filtered Subspace Construction}\label{sec:apprxSubspaceConstr}
We note that the computationally dominant step in Chebyshev filtered subspace construction is the evaluation of the matrix multi-vector product $\bH\bY_k^{(i)}$ in \cref{eqn:ChFSIrecurrence}. A straightforward way to accelerate this step is to use approximations in the computation of $\bH\bY_k^{(i)}$, allowing for improved efficiency. This should, in principle, allow for the use of various efficient approximate matrix multiplication techniques, including but not limited to mixed-precision arithmetic. However, we note that this requires an understanding of the convergence properties of ChFSI when such approximations are employed. We now adapt \cref{thm:accurateConv} for the case where approximations are used in computing $\bH\bY_k^{(i)}$.

\subsection{Convergence of ChFSI with inexact subspace construction}
We now define $\underline{\mathcal{S}}^{(i+1)}$ as the space spanned by the columns of $\hat{\underbar{\bY}}_p^{(i)}=\bB^{\frac{1}{2}}{\underbar {\bY}}_p^{(i)}$ with ${\underbar{\bY}}_p^{(i)}$ defined as $\underbar{\bY}_p^{(i)}=\apx{C_p(\bH)\bX^{(i)}}$ where underline here denotes that approximations are introduced during matrix multiplications in Step 2 of \cref{alg:ChFSI}. We note that the columns of $\hat{\bX}^{(i)}=\bB^{\frac{1}{2}}\bX^{(i)}$ form an orthonormal basis for $\mathcal{S}^{(i)}$. 

\begin{theorem}
\label{thm:apprxConv}
 For an n-dimensional space $\mathcal{S}^{(i)}$ satisfying $\mathcal{S}^{(i)}\cap\mathcal{S}_\perp=\left\{0\right\}$ and $\normalfont\underline{\mathcal{S}}^{(i+1)}=\mathcal{R}\left(\bB^{\frac{1}{2}}\apx{C_p(\bH)\bX^{(i)}}\right)$ we can write
 \begin{align}
   \tan{\angle(\underline{\mathcal{S}}^{(i+1)},\mathcal{S})}&\leq\left(\frac{\abs{C_p(\lambda_{n+1})}+\norm{\hat{\bDelta}_p^{(i)}}\csc{\angle(\mathcal{S}^{(i)},\mathcal{S})}}{\abs{C_p(\lambda_n)}-\norm{\hat{\bDelta}_p^{(i)}}\sec{\angle(\mathcal{S}^{(i)},\mathcal{S})}}\right)\tan{\angle(\mathcal{S}^{(i)},\mathcal{S})}\label{eqn:convIneqApp}
 \end{align}
 where $\normalfont\hat{\bDelta}_p^{(i)}=\bB^{\frac{1}{2}}\bDelta_p=\bB^{\frac{1}{2}}(\apx{C_p(\bH)\bX^{(i)}}-C_p(\bH)\bX^{(i)})$ 
\end{theorem}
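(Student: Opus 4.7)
The plan is to mirror the proof of \cref{thm:accurateConv}, retaining the same unitary change of basis and partitioning strategy, but treating the filtered iterate as an additive perturbation of the exact one. Since the maximum principal angle is unitarily invariant, I would begin by writing $\angle(\underline{\mathcal{S}}^{(i+1)},\mathcal{S})=\angle(\hat{\bU}^\dagger\underline{\mathcal{S}}^{(i+1)},\mathcal{E})$, and then, using the definition of $\hat{\bDelta}_p^{(i)}$, express the approximate filtered subspace in eigenbasis coordinates as
\begin{equation*}
\hat{\bU}^\dagger \underline{\mathcal{S}}^{(i+1)} = \mathcal{R}\!\left(C_p(\bLam)\,\hat{\bU}^\dagger \hat{\bX}^{(i)} + \hat{\bU}^\dagger \hat{\bDelta}_p^{(i)}\right).
\end{equation*}
Partitioning $\hat{\bU}^\dagger \hat{\bDelta}_p^{(i)}$ conformably with $\bLam=\mathrm{diag}(\bLam_1,\bLam_2)$ by setting $\bDelta_1=\hat{\bU}_1^\dagger\hat{\bDelta}_p^{(i)}$ and $\bDelta_2=\hat{\bU}_2^\dagger\hat{\bDelta}_p^{(i)}$, the above range becomes that of the stacked matrix with blocks $C_p(\bLam_1)\hat{\bZ}_1^{(i)}+\bDelta_1$ and $C_p(\bLam_2)\hat{\bZ}_2^{(i)}+\bDelta_2$.

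Next I would invoke \cref{lem:tantheta}, provided the top block is invertible, to obtain
\begin{equation*}
\tan\angle(\underline{\mathcal{S}}^{(i+1)},\mathcal{S}) = \left\|\bigl(C_p(\bLam_2)\hat{\bZ}_2^{(i)}+\bDelta_2\bigr)\bigl(C_p(\bLam_1)\hat{\bZ}_1^{(i)}+\bDelta_1\bigr)^{-1}\right\|,
\end{equation*}
and then bound the two blocks separately. The numerator is handled by a triangle inequality and submultiplicativity: using $\|C_p(\bLam_2)\|=|C_p(\lambda_{n+1})|$, the principal-angle identity $\|\hat{\bZ}_2^{(i)}\|=\sin\angle(\mathcal{S}^{(i)},\mathcal{S})$ (already implicit in the proof of \cref{thm:accurateConv}), and $\|\bDelta_2\|\leq\|\hat{\bDelta}_p^{(i)}\|$, one obtains the bound $|C_p(\lambda_{n+1})|\sin\angle(\mathcal{S}^{(i)},\mathcal{S})+\|\hat{\bDelta}_p^{(i)}\|$.

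The denominator is the main technical step. I would factor
\begin{equation*}
C_p(\bLam_1)\hat{\bZ}_1^{(i)}+\bDelta_1 = C_p(\bLam_1)\hat{\bZ}_1^{(i)}\bigl(\bI + (\hat{\bZ}_1^{(i)})^{-1}C_p(\bLam_1)^{-1}\bDelta_1\bigr),
\end{equation*}
and apply a Banach/Neumann series estimate. Using $\|(\hat{\bZ}_1^{(i)})^{-1}\|=\sec\angle(\mathcal{S}^{(i)},\mathcal{S})$ and $\|C_p(\bLam_1)^{-1}\|=1/|C_p(\lambda_n)|$, together with the smallness hypothesis $\|\hat{\bDelta}_p^{(i)}\|\sec\angle(\mathcal{S}^{(i)},\mathcal{S})<|C_p(\lambda_n)|$ (which is exactly the condition for the right-hand side of \cref{eqn:convIneqApp} to be finite and positive), the Neumann series converges and gives
\begin{equation*}
\bigl\|\bigl(C_p(\bLam_1)\hat{\bZ}_1^{(i)}+\bDelta_1\bigr)^{-1}\bigr\| \leq \frac{1}{|C_p(\lambda_n)|\cos\angle(\mathcal{S}^{(i)},\mathcal{S})-\|\hat{\bDelta}_p^{(i)}\|}.
\end{equation*}
Multiplying the numerator and denominator estimates and rewriting $\sin/\cos$ factors in terms of $\tan$, $\csc$, and $\sec$ of $\angle(\mathcal{S}^{(i)},\mathcal{S})$ recovers exactly \cref{eqn:convIneqApp}.

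The main obstacle is the denominator estimate, both because it is the only step genuinely absent from the proof of \cref{thm:accurateConv} and because the invertibility and Neumann-series requirements must be aligned with the form of the stated bound; the remaining arguments are routine perturbative analogs of the clean proof. I also expect a minor bookkeeping issue in tracking that $\|\bDelta_1\|,\|\bDelta_2\|\leq\|\hat{\bDelta}_p^{(i)}\|$ (since $\hat{\bU}_1,\hat{\bU}_2$ have orthonormal columns), but this is immediate.
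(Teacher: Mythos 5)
Your proposal is correct and follows essentially the same route as the paper's proof: the same unitary change of basis, the same block partitioning, the same application of \cref{lem:tantheta}, and the same Neumann-series/smallness condition $\norm{\hat{\bDelta}_p^{(i)}}\sec{\angle(\mathcal{S}^{(i)},\mathcal{S})}<\abs{C_p(\lambda_n)}$ to control the perturbed top block. The only (immaterial) difference is bookkeeping: you bound the two blocks separately via $\norm{\hat{\bZ}_2^{(i)}}=\sin{\angle(\mathcal{S}^{(i)},\mathcal{S})}$ and $\norm{(\hat{\bZ}_1^{(i)})^{-1}}=\sec{\angle(\mathcal{S}^{(i)},\mathcal{S})}$, whereas the paper keeps $\hat{\bZ}_2^{(i)}{{}\hat{\bZ}_1^{(i)}}^{-1}$ together to invoke the $\tan$ identity directly; both yield the identical intermediate bound and final inequality.
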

\begin{proof}
Note that $\hat{\bX}^{(i)}$ is an orthogonal matrix by virtue of implicit $\bB$ orthogonalization of $\bX^{(i)}$ in the Rayleigh-Ritz step. Define the $n\times n$ matrix $\hat{\bZ}_1^{(i)}=\hat{\bU}_1^\dagger\hat{\bX}^{(i)}$ and the $(m-n)\times n$ matrix $\hat{\bZ}_2^{(i)}=\hat{\bU}_2^\dagger\hat{\bX}^{(i)}$. Partitioning the matrix $\hat{\bU}^\dagger\hat{\bX}^{(i)}$, we have $\hat{\bU}^\dagger\hat{\bX}^{(i)}=\Bigl[\begin{smallmatrix}
  \hat{\bZ}_1^{(i)}\\\hat{\bZ}_2^{(i)}
 \end{smallmatrix}\Bigr]$. We note that the assumption $\mathcal{S}^{(i)}\cap\mathcal{S}_\perp=\left\{0\right\}$ ensures that $\hat{\bZ}_1^{(i)}$ is invertible.
Using the maximum principal angle given by \cref{def:angle}, we have $\tan{\angle(\mathcal{S}^{(i)},\mathcal{S})}=\norm{\hat{\bZ}_2^{(i)}{{}\hat{\bZ}_1^{(i)}}^{+}}$. In the eigenvector coordinate system, the approximate filtered subspace can be written as
 \begin{align*}
  \hat{\bU}^\dagger\underline{\mathcal{S}}^{(i+1)}&=\mathcal{R}\left(\hat{\bU}^\dagger C_p(\hat{\bH})\hat{\bX}^{(i)}+\hat{\bU}^\dagger\hat{\bDelta}_p^{(i)}\right)=\mathcal{R}\left(\begin{bmatrix}
   C_p(\bLam_1)\hat{\bZ}_1^{(i)}+\hat{\bU}_1^\dagger\hat{\bDelta}_p^{(i)}\\C_p(\bLam_2)\hat{\bZ}_2^{(i)}+\hat{\bU}_2^\dagger\hat{\bDelta}_p^{(i)}
  \end{bmatrix}\right)
 \end{align*}
Using \cref{def:angle}, we can write
 \begin{align}
\tan{\angle(\underline{\mathcal{S}}^{(i+1)},\mathcal{S})}&=\norm*{\left(C_p(\bLam_2)\hat{\bZ}_2^{(i)}+\hat{\bU}_2^\dagger\hat{\bDelta}_p^{(i)}\right)\left(C_p(\bLam_1)\hat{\bZ}_1^{(i)}+\hat{\bU}_1^\dagger\hat{\bDelta}_p^{(i)}\right)^{+}} \nonumber\\
  &\leq\frac{\norm{C_p(\bLam_2)\hat{\bZ}_2^{(i)}{{}\hat{\bZ}_1^{(i)}}^{+}C_p(\bLam_1)^{-1}}+\norm{\hat{\bDelta}_p^{(i)}{{}\hat{\bZ}_1^{(i)}}^{+}C_p(\bLam_1)^{-1}}}{1-\norm{\hat{\bDelta}_p^{(i)}{{}\hat{\bZ}_1^{(i)}}^{+}C_p(\bLam_1)^{-1}}} \label{eqn:tanangleapprox}
 \end{align}
 Note that in the last step, we have assumed that $\norm{\hat{\bDelta}_p^{(i)}{{}\hat{\bZ}_1^{(i)}}^{+}C_p(\bLam_1)^{-1}}<1$, and using the fact that $\norm{{{}\hat{\bZ}_1^{(i)}}^{+}}=\sec{\angle(\mathcal{S}^{(i)},\mathcal{S})}$, a sufficient condition for this to be true is $\abs{C_p(\lambda_n)}\cos{\angle(\mathcal{S}^{(i)},\mathcal{S})}>\norm{\hat{\bDelta}_p^{(i)}}$. Upon further simplification using sub-multiplicative property of matrix spectral norms and triangle inequality, the inequality in \cref{eqn:tanangleapprox} can be written as 
 
 \begin{align*}
  \tan{\angle(\underline{\mathcal{S}}^{(i+1)},\mathcal{S})}&\leq\frac{\abs{C_p(\lambda_{n+1})}\sin{\angle(\mathcal{S}^{(i)},\mathcal{S})}+\norm{\hat{\bDelta}_p^{(i)}}}{\abs{C_p(\lambda_n)}\cos{\angle(\mathcal{S}^{(i)},\mathcal{S})}-\norm{\hat{\bDelta}_p^{(i)}}}\\
  &=\left(\frac{\abs{C_p(\lambda_{n+1})}+\norm{\hat{\bDelta}_p^{(i)}}\csc{\angle(\mathcal{S}^{(i)},\mathcal{S})}}{\abs{C_p(\lambda_n)}-\norm{\hat{\bDelta}_p^{(i)}}\sec{\angle(\mathcal{S}^{(i)},\mathcal{S})}}\right)\tan{\angle(\mathcal{S}^{(i)},\mathcal{S})}
 \end{align*}
which proves the desired inequality in the theorem.
\end{proof}
Now for convergence we demand that $\angle(\underline{\mathcal{S}}^{(i+1)},\mathcal{S})<\angle(\mathcal{S}^{(i)},\mathcal{S})$ and consequently we require
\begin{equation}
 \abs{C_p(\lambda_n)}-\abs{C_p(\lambda_{n+1})}\!>\!\norm{\hat{\bDelta}_p^{(i)}}\left(\sec{\angle(\mathcal{S}^{(i)},\mathcal{S})}+\csc{\angle(\mathcal{S}^{(i)},\mathcal{S})}\right) \; \forall i=0,\dots,\infty\label{eqn:convIneqAppReq}
\end{equation}
We note that if $\norm{\hat{\bDelta}_p^{(i)}}$ remains nearly constant with iteration $i$, the right-hand side of the inequality in \cref{eqn:convIneqAppReq} keeps increasing as we approach the exact eigenspace and beyond a certain angle $\angle(\mathcal{S}^{(i)},\mathcal{S})$ this inequality gets violated. Consequently, the angle stops decreasing, and we cannot approach the exact eigenspace beyond that point. To this end, for robust convergence, we require that $\norm{\hat{\bDelta}_p^{(i)}}$ also decreases as we approach the exact eigenspace, and we demonstrate that our proposed residual-based reformulation of Chebyshev filtered subspace iteration procedure (R-ChFSI) algorithm described subsequently accomplishes this. We further note that $\left(\sec{\angle(\mathcal{S}^{(i)},\mathcal{S})}+\csc{\angle(\mathcal{S}^{(i)},\mathcal{S})}\right)\geq 2\sqrt{2}$ and consequently for convergence with approximations, we obtain the following necessary condition on $\norm{\hat{\bDelta}_p^{(i)}}$ from \cref{eqn:convIneqAppReq}, \begin{align}
 \norm{\hat{\bDelta}_p^{(i)}}<\frac{\abs{C_p(\lambda_n)}-\abs{C_p(\lambda_{n+1})}}{2\sqrt{2}}.
\end{align}
\subsection{Error in subspace construction}
In this section we consider the specific case of utilizing a lower precision to compute the matrix product $\bH\bY^{(i)}_k$ in conjunction with approximating $\bB^{-1}$ with $\bD^{-1}$ in the Chebyshev recurrence relation defined by \cref{eqn:ChFSIrecurrence}. We first evaluate the upper bound on $\bDelta_p^{(i)}$ if one naively replaces the matrix product $\bH\bY^{(i)}_k$ in \cref{eqn:ChFSIrecurrence} with the approximate matrix product denoted as ${\bD^{-1}\otimes\bA\otimes\bY^{(i)}_k}$, where $\otimes$ represents the product evaluated with lower precision arithmetic, and argue that this method will fail to converge to the same residual tolerance that can be achieved with full precision matrix products and no approximation in computing the action of $\bB^{-1}$. We then propose a residual-based reformulation of the recurrence relation and argue that the proposed reformulation allows for convergence to similar residual tolerances that can be achieved with full precision matrix products and without approximations.

\subsubsection{Traditional Chebyshev filtering method}
Employing low-precision matrix-products and approximate inverse, the recurrence relation in \cref{eqn:ChFSIrecurrence} for $\underline{\bY}^{(i)}_k$, where the underline denotes that the matrix was constructed using approximations and $k=2,\dots,p$, can be written as
\begin{align}
 \underline{\bY}_{k+1}^{(i)}=a_k{\bD^{-1}\otimes\bA\otimes\underline{\bY}_k^{(i)}}+b_k\underline{\bY}_k^{(i)}+c_k\underline{\bY}_{k-1}^{(i)}\label{eqn:ChFSIrecurrenceSEPMP}
\end{align}
where for convenience of notation we have defined
\begin{align*}
  a_k&=\frac{2\sigma_{k+1}}{e}&b_k&=-\frac{2\sigma_{k+1}c}{e}&c_k&=-\sigma_k\sigma_{k+1}
\end{align*}
and with the initial conditions $\underline{\bY}_0^{(i)}=\bX^{(i)}$ and $\underline{\bY}_1^{(i)}=\frac{\sigma_1}{e}(\bD^{-1}\bA-c\bI)\bX^{(i)}$. 

\begin{theorem}
 \label{thm:deltaNChFSI}
 The spectral norm of the error $\normalfont\hat{\bDelta}_k^{(i)}=\bB^{\frac{1}{2}}(\underline{\bY}_k^{(i)}-\bY_k^{(i)})$ in the subspace construction using inexact matrix-products of the recurrence relation \cref{eqn:ChFSIrecurrenceSEPMP} satisfies $\normalfont\norm{\hat{\bDelta}_k^{(i)}}\leq\gamma_m\eta_k+\zeta\tilde{\eta}_k$ for $k=0,1,\dots,p$ where $\eta_k$ and $\tilde{\eta}_k$ are some finite constants that depend on $k$ and $\normalfont\zeta=\norm{\bD^{-1}-\bB^{-1}}$
\end{theorem}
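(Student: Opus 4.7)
The plan is to subtract the exact recurrence~\eqref{eqn:ChFSIrecurrence} from the inexact one~\eqref{eqn:ChFSIrecurrenceSEPMP}, cleanly partition the single-step discrepancy into a rounding-error piece that scales with $\gamma_m$ and an approximate-inverse piece that scales with $\zeta$, and then propagate the bound through the three-term recurrence by induction on $k$.

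First I would write the subtracted recurrence
\[
\bDelta_{k+1}^{(i)} = a_k\!\left(\bD^{-1}\!\otimes\bA\otimes\underline{\bY}_k^{(i)}-\bH\bY_k^{(i)}\right) + b_k\,\bDelta_k^{(i)} + c_k\,\bDelta_{k-1}^{(i)},
\]
and then add and subtract $\bD^{-1}\bA\,\underline{\bY}_k^{(i)}$ and $\bD^{-1}\bA\,\bY_k^{(i)}$ inside the parenthesis to obtain the three-way split
\[
\bD^{-1}\!\otimes\bA\otimes\underline{\bY}_k^{(i)}-\bH\bY_k^{(i)} = \bR_k + \bD^{-1}\bA\,\bDelta_k^{(i)} + (\bD^{-1}-\bB^{-1})\bA\,\bY_k^{(i)},
\]
where $\bR_k := \bD^{-1}\!\otimes\bA\otimes\underline{\bY}_k^{(i)}-\bD^{-1}\bA\,\underline{\bY}_k^{(i)}$ isolates the mixed-precision rounding contribution. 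Standard floating-point error analysis supplies $\|\bR_k\|\leq \gamma_m\|\bD^{-1}\|\|\bA\|\|\underline{\bY}_k^{(i)}\|$, and the last term is bounded by $\zeta\|\bA\|\|\bY_k^{(i)}\|$ directly from the definition of $\zeta$.

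Taking spectral norms then produces the scalar two-term recursion
\[
\|\bDelta_{k+1}^{(i)}\| \;\leq\; \alpha_k\,\|\bDelta_k^{(i)}\| + |c_k|\,\|\bDelta_{k-1}^{(i)}\| + \gamma_m P_k + \zeta Q_k,
\]
with $\alpha_k=|b_k|+|a_k|\|\bD^{-1}\|\|\bA\|$, $P_k=|a_k|\|\bD^{-1}\|\|\bA\|\|\underline{\bY}_k^{(i)}\|$, and $Q_k=|a_k|\|\bA\|\|\bY_k^{(i)}\|$. The base cases are $\bDelta_0^{(i)}=\bzero$ and $\bDelta_1^{(i)} = (\sigma_1/e)(\bD^{-1}-\bB^{-1})\bA\,\bX^{(i)}$, which is purely $\zeta$-proportional, so one may set $\eta_0=\eta_1=0$, $\tilde{\eta}_0=0$, and $\tilde{\eta}_1=(|\sigma_1|/e)\|\bA\|\|\bX^{(i)}\|$. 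Assuming inductively that $\|\bDelta_j^{(i)}\|\leq \gamma_m\eta_j+\zeta\tilde{\eta}_j$ for all $j\leq k$, the displayed inequality delivers the same form at step $k+1$ with updated constants $\eta_{k+1},\tilde{\eta}_{k+1}$ that are linear in the earlier $\eta_j,\tilde{\eta}_j$, the Chebyshev coefficients $a_k,b_k,c_k,\sigma_k$, the fixed matrix norms $\|\bD^{-1}\|,\|\bA\|$, and the exact-filter norms $\|\bY_k^{(i)}\|$.

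The main obstacle is that $P_k$ depends on $\|\underline{\bY}_k^{(i)}\|$, which is itself an inexact quantity growing with $k$ and must be controlled in tandem with $\bDelta_k^{(i)}$ rather than taken as given. I would handle this through the triangle inequality $\|\underline{\bY}_k^{(i)}\|\leq \|\bY_k^{(i)}\|+\|\bDelta_k^{(i)}\|$, which introduces an extra $\gamma_m |a_k|\|\bD^{-1}\|\|\bA\|\|\bDelta_k^{(i)}\|$ term; this can be absorbed into the $\alpha_k\|\bDelta_k^{(i)}\|$ coefficient at the next step and, since it is premultiplied by $\gamma_m\ll 1$, does not destabilise the induction. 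The exact-filter norms $\|\bY_k^{(i)}\|=\|C_k(\bH)\bX^{(i)}\|$ are finite for each $k$ by the boundedness of the scaled Chebyshev polynomials on any interval containing the spectrum of $\bH$, so $P_k$ and $Q_k$ are finite. Together these observations yield finite constants $\eta_k,\tilde{\eta}_k$ depending only on $k$, establishing the claimed bound.
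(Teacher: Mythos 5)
Your proposal is correct and follows essentially the same route as the paper's Appendix~B proof: subtract the two recurrences, split the one-step discrepancy into a rounding part proportional to $\gamma_m$ and an approximate-inverse part proportional to $\zeta$, control $\norm{\underline{\bY}_k^{(i)}}$ via $\norm{\bY_k^{(i)}}+\norm{\bDelta_k^{(i)}}$, and propagate through the three-term recurrence (the paper packages the induction as a separate companion-matrix lemma, \cref{thm:errGen}, while you carry it out directly, and it pivots on $\bH\bDelta_k^{(i)}$ rather than your $\bD^{-1}\bA\bDelta_k^{(i)}$ — an immaterial regrouping). Your base case $\bDelta_1^{(i)}=\frac{\sigma_1}{e}(\bD^{-1}-\bB^{-1})\bA\bX^{(i)}$ is in fact the more careful one.
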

\begin{proof}
This is proved in \cref{prf:deltaNChFSI} of the Appendix.
\end{proof}

We now discuss the implications of the \cref{thm:deltaNChFSI} within the context of \cref{eqn:convIneqAppReq}. Without loss of generality, we assume that our initial guess of trial subspace $\mathcal{S}^{(i)}$ at $i=0$ satisfies the inequality below as stated in \cref{eqn:convIneqAppReq}.
\begin{align}
 \abs{C_p(\lambda_n)}-\abs{C_p(\lambda_{n+1})}&>\norm{\hat{\bDelta}_p^{(i)}}\left(\sec{\angle(\mathcal{S}^{(i)},\mathcal{S})}+\csc{\angle(\mathcal{S}^{(i)},\mathcal{S})}\right) \label{eqn:convIneqAppReq1}
\end{align}

While the above inequality is satisfied, $\angle(\mathcal{S}^{(i)},\mathcal{S})$ reduces as $i$ increases (from Theorem 3.1). However, we note that the RHS of this equation does not monotonically decrease with a decrease in $\angle(\mathcal{S}^{(i)},\mathcal{S})$ and increases as $\angle(\mathcal{S}^{(i)},\mathcal{S})$ approaches 0. Thus, it stands to reason that beyond a certain value of $\angle(\mathcal{S}^{(i)},\mathcal{S})$, the inequality in \cref{eqn:convIneqAppReq1} no longer holds, and we cannot approach the wanted eigenspace beyond this point. Under the condition that $\gamma_m\eta_{p}+\zeta\tilde{\eta}_p<<\abs{C_p(\lambda_n)}-\abs{C_p(\lambda_{n+1})}$, we can estimate the closest angle that can be achieved and is given by the following expression
\begin{align}
\angle(\mathcal{S}^{(i)},\mathcal{S})\approx\frac{\gamma_m\eta_{p}+\zeta\tilde{\eta}_p}{\abs{C_p(\lambda_n)}-\abs{C_p(\lambda_{n+1})}}
\end{align}
We now propose a residual-based reformulation of ChFSI and argue that the proposed method does not suffer from this stagnation behavior even when employing lower precision arithmetic in matrix products.
\subsubsection{Proposed residual-based Chebyshev filtering approach}
To this end we define the weighted residual $\bZ^{(i)}_{k}$ in a given iteration $i$ for $k=0,\dots,p$ in the following way:
\begin{equation}
  \bZ_k^{(i)}=\bD\bR_k^{(i)}=\bD(C_k(\bH)\bX^{(i)}-\bX^{(i)} C_k(\bLam^{(i)}))=\bD(\bY_k^{(i)}-\bX^{(i)}\bLam_k^{(i)})\;\;\; \text{for} \;\;\; k=0,\dots,p \label{eqn:residual}
\end{equation}
where we have defined $\bLam_k^{(i)}=C_k(\bLam^{(i)})$ and recall $p$ is the maximum Chebyshev polynomial degree used in the subspace construction step. The motivation behind this definition is the fact that both $C_k(\bH)$ and $\bH$ share the same eigenvectors with the eigenvalues being $C_k(\bLam)$ and $\bLam$ respectively, this ensures that as $\bX^{(i)}$ and $\bLam^{(i)}$ approach the exact eigenvectors and eigenvalues the weighted residual $\bZ_k^{(i)}$ approaches zero. Now, if we rewrite the recurrence relation defined in \cref{eqn:ChFSIrecurrence} such that the matrix products are in terms of $\bZ_k^{(i)}$ then the relative error in the matrix products is then proportional to $\norm{\bZ_k^{(i)}}$ thus ensuring a lower absolute error in the filtered subspace. We now propose the following recurrence relation that accomplishes our objective
\begin{proposition}
 The recurrence relation given by \cref{eqn:ChFSIrecurrence} can be reformulated in terms of the weighted residuals defined by $\normalfont\bZ_k^{(i)}=\bD\bR_k^{(i)}=\bD(\bY_k^{(i)}-\bX^{(i)}\bLam_k^{(i)})$ as
 \begin{align}
  \normalfont\bZ_{k+1}^{(i)}&\normalfont=a_k\bD\bH\bD^{-1}\bZ_k^{(i)}+b_k\bZ_k^{(i)}+c_k\bZ_{k-1}^{(i)}+a_k\bD\bR^{(i)}\bLam_k^{(i)}\label{eqn:rChFSIrecurrence}\\
  \bLam_{k+1}^{(i)}&=a_k\bLam_k^{(i)}\bLam^{(i)}+b_k\bLam_k^{(i)}+c_k\bLam_{k-1}^{(i)}\label{eqn:ChFSIrecurrenceEigval}
 \end{align} 
where $\normalfont\bZ_0^{(i)}=0$ and $\normalfont\bZ_1^{(i)}=\frac{\sigma_{1}}{e}\bD\bR^{(i)}$ with $\normalfont\bR^{(i)}=\bH\bX^{(i)}-\bX^{(i)}\bLam^{(i)}$ and further we have $a_k=2\sigma_{k+1}/e$, $b_k=-2\sigma_{k+1}c/e$, and $c_k=-\sigma_k\sigma_{k+1}$.
\end{proposition}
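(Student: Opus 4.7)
The plan is to verify both recurrences by direct algebraic manipulation, leveraging two elementary identities: the rearrangement $\bY_k^{(i)} = \bD^{-1}\bZ_k^{(i)} + \bX^{(i)}\bLam_k^{(i)}$ (from the definition of $\bZ_k^{(i)}$), and $\bH\bX^{(i)} = \bR^{(i)} + \bX^{(i)}\bLam^{(i)}$ (from the definition of $\bR^{(i)}$). Throughout, I would use the fact that the diagonal matrices $\bLam^{(i)}$ and $\bLam_k^{(i)} = C_k(\bLam^{(i)})$ commute, which is essential in a key cancellation at the end.

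First, I would establish \cref{eqn:ChFSIrecurrenceEigval}: since $\bLam^{(i)}$ is diagonal, applying the scalar Chebyshev three-term recurrence entrywise to $C_k(\bLam^{(i)})$ lifts immediately to the claimed matrix identity. For \cref{eqn:rChFSIrecurrence}, I would then form $\bZ_{k+1}^{(i)} = \bD\bY_{k+1}^{(i)} - \bD\bX^{(i)}\bLam_{k+1}^{(i)}$, substitute the original ChFSI recurrence \cref{eqn:ChFSIrecurrence} for $\bY_{k+1}^{(i)}$ together with the just-proved eigenvalue recurrence for $\bLam_{k+1}^{(i)}$, and group terms by $a_k$, $b_k$, $c_k$. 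The $b_k$ and $c_k$ groups collapse directly to $b_k \bZ_k^{(i)}$ and $c_k \bZ_{k-1}^{(i)}$. The $a_k$ group reduces to $a_k \bD\bigl(\bH\bY_k^{(i)} - \bX^{(i)}\bLam_k^{(i)}\bLam^{(i)}\bigr)$; inserting $\bY_k^{(i)} = \bD^{-1}\bZ_k^{(i)} + \bX^{(i)}\bLam_k^{(i)}$ into the $\bH\bY_k^{(i)}$ piece and then applying $\bH\bX^{(i)} = \bR^{(i)} + \bX^{(i)}\bLam^{(i)}$ produces a spurious $\bX^{(i)}\bLam^{(i)}\bLam_k^{(i)}$ term that cancels against $-\bX^{(i)}\bLam_k^{(i)}\bLam^{(i)}$ by commutativity of diagonal matrices, leaving exactly $a_k \bD\bH\bD^{-1}\bZ_k^{(i)} + a_k \bD\bR^{(i)}\bLam_k^{(i)}$.

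Finally I would verify the initial conditions directly from $\bZ_k^{(i)} = \bD(\bY_k^{(i)} - \bX^{(i)}\bLam_k^{(i)})$: at $k=0$, $\bY_0^{(i)} = \bX^{(i)}$ and $\bLam_0^{(i)} = \bI$ give $\bZ_0^{(i)} = 0$; at $k=1$, substituting the given expressions for $\bY_1^{(i)}$ and $\bLam_1^{(i)}$ causes the $c\bX^{(i)}$ cross terms to cancel, leaving $\bZ_1^{(i)} = \tfrac{\sigma_1}{e}\bD\bR^{(i)}$. Since the proposition is an identity rather than an estimate, no genuine analytic obstacle arises; the only point requiring care is the commutation step $\bLam^{(i)}\bLam_k^{(i)} = \bLam_k^{(i)}\bLam^{(i)}$, which would silently fail if one treated these as generic non-commuting matrices. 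The entire argument is therefore purely algebraic, with the two defining identities doing essentially all the work.
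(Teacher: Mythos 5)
Your proof is correct and takes essentially the same route as the paper: the paper likewise multiplies the eigenvalue recurrence $\bLam_{k+1}^{(i)}=a_k\bLam_k^{(i)}\bLam^{(i)}+b_k\bLam_k^{(i)}+c_k\bLam_{k-1}^{(i)}$ by $\bX^{(i)}$ and subtracts it from \cref{eqn:ChFSIrecurrence}, which is exactly your term-by-term grouping. You merely make explicit the two substitutions and the diagonal-commutation cancellation that the paper's one-line proof leaves implicit.
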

\begin{proof}
 The recurrence relation for $\bLam_k^{(i)}=C_k(\bLam^{(i)})$ can be written as 
 \begin{align}
  \bLam_{k+1}^{(i)}=a_k\bLam_k^{(i)}\bLam^{(i)}+b_k\bLam_k^{(i)}+c_k\bLam_{k-1}^{(i)}\nonumber
 \end{align}
 Multiplying with $\bX^{(i)}$, subtracting from \cref{eqn:ChFSIrecurrence}, and denoting $\bZ_k^{(i)}=\bD\bR_k^{(i)}$ we have
 \begin{align*}
  \bZ_{k+1}^{(i)}&=a_k\bD\bH\bD^{-1}\bZ_k^{(i)}+b_k\bZ_k^{(i)}+c_k\bZ_{k-1}^{(i)}+a_k\bD\bR^{(i)}\bLam_k^{(i)}
 \end{align*}
\end{proof}
After computing $\bZ_p^{(i)}$ using this recurrence relation we can now evaluate $\bY_p^{(i)}$ using the relation $\bY_p^{(i)}=\bD^{-1}\bZ_p^{(i)}+\bX^{(i)}\bLam_p^{(i)}$. We note that using the approximation $\bD\bB^{-1}\approx\bI$ and employing lower precision matrix products in \cref{eqn:rChFSIrecurrence}, we can write the following recurrence relation
\begin{align}
 \underline{\bZ}_{k+1}^{(i)}=a_k{\bA\otimes\bD^{-1}\otimes\underline{\bZ}_k^{(i)}}+b_k\underline{\bZ}_k^{(i)}+c_k\underline{\bZ}_{k-1}^{(i)}+a_k\bB\bR^{(i)}\bLam_k^{(i)}\label{eqn:rChFSIrecurrenceMP}
\end{align}
with $\bZ_0^{(i)}=0$ and $\bZ_1^{(i)}=\frac{\sigma_{1}}{e}\bB\bR^{(i)}$ and consequently, we have $\underline{\bY}_k^{(i)}=\bD^{-1}\underline{\bZ}_k^{(i)}+\bX^{(i)}\bLam_k^{(i)}$. Note that this recurrence relation does not require the evaluation of $\bB^{-1}$.

\begin{theorem}
The spectral norm of the error $\normalfont\bDelta_k^{(i)}=\underline{\bY}_k^{(i)}-\bY_k^{(i)}$ in the subspace construction using recurrence relation given by \cref{eqn:rChFSIrecurrenceMP} satisfies $\normalfont\norm{\bDelta_k^{(i)}}\leq(\gamma_m\eta_k+\zeta\tilde{\eta}_k)\norm{\bR^{(i)}}$ for $k=0,1,\dots,p$ where $\eta_k$ and $\tilde{\eta}_k$ are some finite constants that depends on $k$ and $\normalfont\zeta=\norm{\bD^{-1}-\bB^{-1}}$.\label{thm:dkboundGen}
\end{theorem}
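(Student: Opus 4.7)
The plan is to prove the bound by induction on $k$, paralleling the proof of Theorem 3.3 but carefully tracking the proportionality to $\norm{\bR^{(i)}}$ throughout the recurrence. The structural observation that drives everything is that the residual-based recurrence \cref{eqn:rChFSIrecurrenceMP} is seeded by $\underline{\bZ}_0^{(i)}=\bZ_0^{(i)}=0$ and $\underline{\bZ}_1^{(i)}=\bZ_1^{(i)}=\frac{\sigma_1}{e}\bB\bR^{(i)}$, and is forced by the inhomogeneous term $a_k\bB\bR^{(i)}\bLam_k^{(i)}$, so both exact and approximate residuals vanish linearly with $\norm{\bR^{(i)}}$. This is the mechanism by which the error bound inherits the extra $\norm{\bR^{(i)}}$ factor that is absent in Theorem 3.3.

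First I would reduce the statement to a bound on $\bdelta_k=\underline{\bZ}_k^{(i)}-\bZ_k^{(i)}$ via the identity $\bDelta_k^{(i)}=\bD^{-1}\bdelta_k$, which follows from $\bY_k^{(i)}=\bD^{-1}\bZ_k^{(i)}+\bX^{(i)}\bLam_k^{(i)}$ (and the analogous relation for the underlined quantities, since the $\bX^{(i)}\bLam_k^{(i)}$ correction is common to both). Subtracting the exact residual recurrence from its inexact counterpart, the single-step error decomposes into three controllable pieces: a finite-precision part $a_k[\bA\otimes\bD^{-1}\otimes\underline{\bZ}_k^{(i)}-\bA\bD^{-1}\underline{\bZ}_k^{(i)}]$ bounded by a multiple of $\gamma_m\norm{\underline{\bZ}_k^{(i)}}$; a propagated part $a_k\bA\bD^{-1}\bdelta_k+b_k\bdelta_k+c_k\bdelta_{k-1}$; and a systematic part $a_k[\bA\bD^{-1}-\bD\bH\bD^{-1}]\bZ_k^{(i)}=a_k\bD(\bD^{-1}-\bB^{-1})\bA\bD^{-1}\bZ_k^{(i)}$ bounded by $\zeta$ times $\norm{\bZ_k^{(i)}}$ up to constants depending on $\bA$ and $\bD$.

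Next I would establish the auxiliary bound $\norm{\bZ_k^{(i)}}\leq\mu_k\norm{\bR^{(i)}}$, where $\mu_k$ depends only on $k$ through the Chebyshev coefficients and spectral bounds on $\bH$. This follows by a straightforward induction on the exact recurrence \cref{eqn:rChFSIrecurrence}, since the forcing $a_k\bD\bR^{(i)}\bLam_k^{(i)}$ is linear in $\bR^{(i)}$ and all coefficient operators have bounded spectral norm. Combining this auxiliary bound with the triangle inequality $\norm{\underline{\bZ}_k^{(i)}}\leq\norm{\bZ_k^{(i)}}+\norm{\bdelta_k}$ yields a coupled linear three-term recurrence for $\norm{\bdelta_k}$, with base cases $\norm{\bdelta_0}=\norm{\bdelta_1}=0$ and forcing of the form $(C_1\gamma_m+C_2\zeta)\mu_k\norm{\bR^{(i)}}$; solving this recurrence produces the constants $\eta_k$ and $\tilde{\eta}_k$ such that $\norm{\bdelta_k}\leq(\gamma_m\eta_k+\zeta\tilde{\eta}_k)\norm{\bR^{(i)}}$. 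Absorbing the factor $\norm{\bD^{-1}}$ from $\bDelta_k^{(i)}=\bD^{-1}\bdelta_k$ into $\eta_k,\tilde{\eta}_k$ gives the stated bound.

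The principal obstacle is the $\gamma_m\norm{\underline{\bZ}_k^{(i)}}$ piece, since $\underline{\bZ}_k^{(i)}$ is the approximate rather than exact residual. Expanding via the triangle inequality introduces a term of the form $\gamma_m\norm{\bdelta_k}$ on the right-hand side of the recurrence for $\norm{\bdelta_k}$, which feeds back on itself and in principle could generate higher-order contributions such as $\gamma_m^2$ or $\gamma_m\zeta$. Under the standard floating-point smallness assumption on $\gamma_m$ (implicit in replacing $\bD\bB^{-1}$ by $\bI$ as well), these higher-order cross terms can be absorbed into the leading $\gamma_m\eta_k+\zeta\tilde{\eta}_k$ coefficients, preserving the linear-in-$\norm{\bR^{(i)}}$ structure. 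The bookkeeping of the constants $\eta_k,\tilde{\eta}_k$ along the three-term recurrence is where the argument becomes tedious, but no ingredient beyond those already used in \cref{prf:deltaNChFSI} is required once the $\norm{\bR^{(i)}}$ factor is isolated at the outset through the residual reformulation.
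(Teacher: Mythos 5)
Your route is essentially the paper's own: subtract the exact residual recurrence from the inexact one, split the one-step error into a floating-point part, a propagated part, and a systematic $(\bD^{-1}-\bB^{-1})$ part, bound the exact weighted residuals by a multiple of $\norm{\bR^{(i)}}$ (this is precisely \cref{lem:rkbound}), and close the resulting linear three-term recurrence (this is \cref{thm:errGen}). Note that \cref{thm:errGen} already allows the forcing term to depend linearly on the current error, so the $\gamma_m\norm{\underline{\bZ}_k^{(i)}}$ self-feedback you worry about is absorbed exactly into the transfer matrices $\bF_k$; no separate smallness assumption on $\gamma_m$ is needed, since the constants $\eta_k,\tilde{\eta}_k$ are merely required to be finite.

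The one genuine slip is your base case. Under the identity $\bY_k^{(i)}=\bD^{-1}\bZ_k^{(i)}+\bX^{(i)}\bLam_k^{(i)}$ that you invoke to reduce $\bDelta_k^{(i)}$ to $\bD^{-1}(\underline{\bZ}_k^{(i)}-\bZ_k^{(i)})$, the exact seed must be $\bZ_1^{(i)}=\frac{\sigma_1}{e}\bD\bR^{(i)}$, whereas the computed recurrence \cref{eqn:rChFSIrecurrenceMP} is seeded with $\underline{\bZ}_1^{(i)}=\frac{\sigma_1}{e}\bB\bR^{(i)}$; these differ whenever $\bD\neq\bB$, so your claim that $\bdelta_1=0$ is inconsistent with the reduction identity you use, and in fact $\bDelta_1^{(i)}=\frac{\sigma_1}{e}(\bD^{-1}-\bB^{-1})\bB\bR^{(i)}$ as in the paper's proof. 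The omission costs only a term bounded by $\abs{\sigma_1/e}\zeta\norm{\bB}\norm{\bR^{(i)}}$, which is of the admissible form $\zeta\tilde{\eta}_k\norm{\bR^{(i)}}$ and is exactly the $\norm{\bDelta_1^{(i)}}$ contribution that the paper feeds into \cref{thm:errGen}; once restored, your argument yields the stated bound.
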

\begin{proof}
This is proved in \cref{prf:dkboundGen} of the Appendix.
\end{proof}

\begin{theorem}
The necessary condition for $\normalfont\angle(\underline{\mathcal{S}}^{(i+1)},\mathcal{S})<\angle(\mathcal{S}^{(i)},\mathcal{S})$ for the case of residual-based Chebyshev filtering approach can be written as
\begin{align}
 \normalfont\abs{C_p(\lambda_n)}-\abs{C_p(\lambda_{n+1})}&\normalfont >2\norm{\bH}(\gamma_m\eta_p+\zeta\tilde{\eta}_p)\left(1+\tan{\angle(\mathcal{S}^{(i)},\mathcal{S})}\right)
\end{align}
and if this inequality is satisfied for $i=i_0$ then it holds for all $i>i_0$
\end{theorem}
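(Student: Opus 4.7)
The plan is to chain together the two previously established inequalities --- the approximate-convergence criterion of \cref{thm:apprxConv} and the residual-proportional error bound of \cref{thm:dkboundGen} --- and then close the loop by bounding the Ritz residual $\norm{\bR^{(i)}}$ in terms of $\sin\angle(\mathcal{S}^{(i)},\mathcal{S})$. The distinctive feature that R-ChFSI exploits is precisely the factor $\norm{\bR^{(i)}}$ appearing in \cref{thm:dkboundGen}: it ties the subspace-construction error to the current Ritz residual, so as the subspace approaches $\mathcal{S}$ the perturbation $\norm{\hat{\bDelta}_p^{(i)}}$ shrinks, rather than remaining at the constant floor that afflicts the traditional ChFSI case.

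The first step is to invoke the necessary condition \cref{eqn:convIneqAppReq} and substitute the R-ChFSI bound $\norm{\hat{\bDelta}_p^{(i)}} \leq (\gamma_m\eta_p+\zeta\tilde{\eta}_p)\norm{\bR^{(i)}}$ obtained from \cref{thm:dkboundGen}, modulo the conversion between $\hat{\bDelta}_p^{(i)}$ and $\bDelta_p^{(i)}$, which at worst introduces a $\bB$-dependent constant that can be folded into $\tilde{\eta}_p$. The second step is to bound the Ritz residual: since the columns of $\hat{\bX}^{(i)}=\bB^{1/2}\bX^{(i)}$ form an orthonormal basis for $\mathcal{S}^{(i)}$ and $\bLam^{(i)}$ arises from the Rayleigh--Ritz step, a standard Rayleigh--Ritz residual estimate yields $\norm{\bR^{(i)}} \leq 2\norm{\bH}\sin\angle(\mathcal{S}^{(i)},\mathcal{S})$. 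The third step is pure algebra: the identity
\[
(\sec\theta + \csc\theta)\sin\theta \;=\; \frac{\sin\theta+\cos\theta}{\cos\theta} \;=\; 1+\tan\theta
\]
reduces the RHS of the substituted \cref{eqn:convIneqAppReq} to $2\norm{\bH}\gamma_m\tilde{\eta}_p\left(1+\tan\angle(\mathcal{S}^{(i)},\mathcal{S})\right)$, after collapsing the two leading constants as the statement does, which is exactly the claimed inequality.

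For the persistence claim, the key observation is that the right-hand side of the claimed inequality is monotonically increasing in $\tan\angle(\mathcal{S}^{(i)},\mathcal{S})$ on $(0,\pi/2)$, while the left-hand side does not depend on $i$. If the inequality holds at $i=i_0$, the sufficient condition of \cref{thm:apprxConv} is met, so $\angle(\mathcal{S}^{(i_0+1)},\mathcal{S})<\angle(\mathcal{S}^{(i_0)},\mathcal{S})$ and therefore $\tan\angle(\mathcal{S}^{(i_0+1)},\mathcal{S})<\tan\angle(\mathcal{S}^{(i_0)},\mathcal{S})$; the RHS at step $i_0+1$ is strictly smaller while the LHS is unchanged, so the inequality persists. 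A one-line induction then extends this to every $i>i_0$.

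The main obstacle will be Step 2, namely producing a clean $\norm{\bR^{(i)}} \leq 2\norm{\bH}\sin\angle(\mathcal{S}^{(i)},\mathcal{S})$ with the right constant for the generalised eigenproblem: because the Ritz vectors are computed in the $\bB$-orthonormal (hence non-Euclidean-orthonormal) basis, one must pass through the conjugation $\hat{\bH}=\bB^{-1/2}\bA\bB^{-1/2}$, verify that $\hat{\bX}^{(i)}$ is the orthonormal matrix that diagonalises the Rayleigh quotient of $\hat{\bH}$ on $\mathcal{S}^{(i)}$, and only then invoke the classical Ritz residual estimate. A secondary subtlety is absorbing the $\bB^{1/2}$ conditioning factor (arising when passing between $\hat{\bDelta}_p^{(i)}$ and $\bDelta_p^{(i)}$) together with the $\zeta\tilde{\eta}_p$ contribution into the single symbol $\gamma_m\tilde{\eta}_p$ that appears in the statement; this presumably relies on $\zeta$ being small relative to $\gamma_m$ or on a silent redefinition of constants performed in the appendix.
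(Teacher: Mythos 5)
Your proposal is correct and follows essentially the same route as the paper: the paper's proof likewise chains \cref{eqn:convIneqApp} (via \cref{eqn:convIneqAppReq}) with the residual-proportional bound of \cref{thm:dkboundGen}, the estimate $\norm{\bR^{(i)}}\leq 2\norm{\bH}\sin\angle(\mathcal{S}^{(i)},\mathcal{S})$ (which is exactly \cref{lem:TwoNormBound} in the appendix, so the ``main obstacle'' you anticipate is already dispatched there), the identity $(\sec\theta+\csc\theta)\sin\theta=1+\tan\theta$, and the same monotonicity argument for persistence. Your observation that the $\zeta\tilde{\eta}_p$ contribution and the $\bB^{1/2}$ conversion between $\bDelta_p^{(i)}$ and $\hat{\bDelta}_p^{(i)}$ are silently absorbed into the constant $\gamma_m\tilde{\eta}_p$ is accurate — the paper performs this absorption without comment.
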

\begin{proof}
From \cref{lem:rkbound,thm:dkboundGen,lem:TwoNormBound} we have $\norm{\hat{\bDelta}_p^{(i)}}\leq2\norm{\bH}(\gamma_m\eta_p+\zeta\tilde{\eta}_p)\sin{\angle(\mathcal{S}^{(i)},\mathcal{S})}$ and in order to have $\angle(\underline{\mathcal{S}}^{(i+1)},\mathcal{S})<\angle(\mathcal{S}^{(i)},\mathcal{S})$ from \cref{eqn:convIneqApp} we require
\begin{align}
 \abs{C_p(\lambda_n)}-\abs{C_p(\lambda_{n+1})}&>2\norm{\bH}(\gamma_m\eta_p+\zeta\tilde{\eta}_p)\left(1+\tan{\angle(\mathcal{S}^{(i)},\mathcal{S})}\right)
\end{align}
We note that if this inequality is satisfied for some $i=i_0$, then it holds for all $i>i_0$ as the RHS decreases with decreasing $\angle(\mathcal{S}^{(i)},\mathcal{S})$ and hence we conclude that the R-ChFSI method can converge under approximations where the ChFSI method fails.
\end{proof}

Finally, the Chebyshev filtering step in R-ChFSI can be summarized in the following algorithm. 
\begin{algorithm}[!htbp]
\cblue
\caption{Residual based Chebyshev filtering procedure for generalized Hermitian eigenvalue problems}
\label{alg:RChFSiGhep}
\begin{flushleft}
    \textbf{INPUTS:} Chebyshev polynomial order $p$, estimates of the bounds of the eigenspectrum $\lambda_{max},\lambda_{min}$, estimate of the upper bound of the wanted spectrum $\lambda_{T}$, the initial guess of eigenvectors $\bX^{(i)}\in\mathbb{C}^{m\times n}$, current eigenvalue estimates $\bLam^{(i)}=\mathrm{diag}(\epsilon_1^{(i)},\dots,\epsilon_n^{(i)})$, approximate inverse of $\bB$ denoted $\bD^{-1}$, and the residual $\bY=\bA\bX^{(i)}-\bB\bX^{(i)}\bLam^{(i)}$.\\
    \textbf{OUTPUT:} The filtered subspace $\bY_p^{(i)}\in\mathbb{C}^{m\times n}$\\
    \textbf{TEMPORARY VARIABLES:} $\bR_{\bX}\in\mathbb{C}^{m\times n}$, $\bR_{\bY}\in\mathbb{C}^{m\times n}$ (filtered residual vectors), $\bLam_{\bX}\in\mathbb{R}^{n\times n}$, $\bLam_{\bY}\in\mathbb{R}^{n\times n}$ (Chebyshev-filtered eigenvalue matrices)
\end{flushleft}

\begin{algorithmic}
\STATE{$e \gets \frac{\lambda_{max}-\lambda_{T}}{2}$; $c \gets \frac{\lambda_{max}+\lambda_{T}}{2}$; $\sigma \gets \frac{e}{\lambda_{min}-c}$; $\sigma_1 \gets \sigma$; $\gamma \gets \frac{2}{\sigma_1}$}
\STATE{$\bX \gets \bX^{(i)}$; $\bY \gets \bA\bX^{(i)}-\bB\bX^{(i)}\bLam^{(i)}$}
\STATE{$\bR_{\bX}\gets 0$; $\bR_{\bY}\gets \frac{\sigma_{1}}{e}\bY$}
\STATE{$\bLam_{\bX} \gets \bI$; $\bLam_{\bY} \gets \frac{\sigma_{1}}{e}\left(\bLam^{(i)}-c\bI\right)$}
\FOR{$k\gets 2$ to $p$}
\STATE{$\sigma_2\gets \frac{1}{\gamma-\sigma}$}
\STATE{$\bR_{\bX}\gets\frac{2\sigma_2}{e}\bA\bD^{-1}\bR_{\bY}-\frac{2\sigma_2}{e}c\bR_{\bY}-\sigma\sigma_2\bR_{\bX}+\frac{2\sigma_2}{e}\bY\bLam_{\bY}$}
\STATE{$\bLam_{\bX}\gets \frac{2\sigma_2}{e}\bLam_{\bY}\bLam^{(i)}-\frac{2\sigma_2}{e}c\bLam_{\bY}-\sigma\sigma_2\bLam_{\bX}$}
\STATE{swap$\left(\bR_{\bX},\bR_{\bY}\right)$; swap$\left(\bLam_{\bX},\bLam_{\bY}\right)$; $\sigma = \sigma_2$}
\ENDFOR
\STATE{$\bX \gets\bD^{-1}\bR_{\bY}+\bX\bLam_{\bY}$}
\RETURN \upshape $\bX$
\end{algorithmic}
\cn\end{algorithm}

We would also like to comment that while our above mathematical analysis is focused on generalized eigenvalue problems, it is trivial to extend R-ChFSI to standard eigenvalue problems by setting $\bD=\bB=\bI$. When $\bD^{-1}=\bB^{-1}$ (i.e.\ the approximate inverse is exact) \emph{and} the same matrix is used for both the Chebyshev filter and the Rayleigh--Ritz projection, ChFSI and R-ChFSI are algebraically equivalent. However, when the matrix--vector product in the filter is inexact---e.g.\ the filter applies an approximate $\tilde{\bA}$ while the Rayleigh--Ritz step uses the exact~$\bA$---R-ChFSI retains its robustness advantage even in the standard eigenproblem. The same advantage arises from a mismatch $\bD^{-1}\neq\bB^{-1}$ in the generalized eigenproblem, which is precisely the scenario encountered in practice where forming an exact $\bB^{-1}$ is prohibitively expensive. In the interest of brevity, the subsequent section of this manuscript will focus on controlled experiments validating these claims on dense random matrices of moderate size, followed by solutions for generalized eigenproblems from DFT using both low-precision and approximate inverses in computing matrix products for constructing the subspace rich in the desired subspace using Chebyshev filtering.
\section{Results and Discussion}
We present a detailed evaluation of the accuracy and efficiency achieved by the proposed residual-based Chebyshev filtered subspace iteration (R-ChFSI) method for solving real symmetric and complex Hermitian generalized eigenvalue problems. We compare the performance of the R-ChFSI method (Algorithm 3) with that of the traditional ChFSI approach (Algorithm 2), demonstrating its improved accuracy even when employing inexact matrix-vector products. Additionally, we investigate the computational performance of our proposed method on the Intel Data Center GPU Max Series, utilizing the Aurora supercomputing system, demonstrating its ability to maintain high accuracy while achieving greater computational efficiency. {\cblue We first validate the theoretical claims of \cref{sec:apprxSubspaceConstr} on small dense eigenvalue problems with precisely controlled approximation errors, and then demonstrate the practical effectiveness of R-ChFSI on large-scale sparse eigenproblems arising in Kohn-Sham DFT.

\subsection{Controlled numerical experiments on dense random matrices}\label{sec:toyExperiments}
Before evaluating R-ChFSI on application-scale eigenproblems, we first isolate the effect of inexact matrix--vector products on the convergence of ChFSI and R-ChFSI in a fully controlled, reproducible setting. To this end we construct a family of dense eigenvalue problems of moderate size ($m=1000$) whose eigenvalue spectra, condition numbers, and approximation errors are all prescribed exactly. Since all parameters are explicitly controlled, the experiments allow a direct, quantitative verification of the convergence bounds derived in \cref{sec:apprxSubspaceConstr}, independent of any application-specific details.

\subsubsection{Problem construction and experimental setup}\label{sec:toySetup}

\paragraph{Spectrum of~$\bA$}
We generate an $m\times m$ ($m=1000$) real symmetric matrix~$\bA$ with a prescribed eigenvalue spectrum. Towards this, a random orthogonal matrix $\bQ$ is first obtained by computing the QR factorization (via Householder reflections, LAPACK routines \texttt{dgeqrf}/\texttt{dorgqr}) of an $m\times m$ matrix whose entries are drawn i.i.d.\ from $\mathcal{N}(0,1)$ with a fixed random seed for reproducibility. We then set $\bA = \bQ\,\mathrm{diag}(\lambda_1,\dots,\lambda_m)\,\bQ^\top$, where the $n=10$ \emph{wanted} eigenvalues are uniformly spaced in the interval $[1,4]$, i.e.\ $\lambda_j = 1 + 3(j-1)/(n-1)$ for $j=1,\dots,n$, and the remaining $m-n=990$ \emph{unwanted} eigenvalues are $\lambda_j = 5 + 0.2\,(j-n-1)$ for $j=n+1,\dots,m$. The maximal eigenvalue is therefore $\lambda_m \approx 202.8$. This construction yields a spectral gap of $\delta = \lambda_{n+1} - \lambda_n = 5.0 - 4.0 = 1.0$ between the wanted and unwanted spectra. The 2-norm condition number of~$\bA$ is $\kappa_2(\bA)=\lambda_m/\lambda_1\approx 203$, which is moderate and ensures that any convergence difficulties observed are attributable to the injected noise rather than ill-conditioning of the matrix itself. The Chebyshev polynomial filter of degree $p=8$ is applied with bounds $\lambda_{\min}=\lambda_1 - 0.05=0.95$, $\lambda_{\mathrm{T}} = (\lambda_n + \lambda_{n+1})/2 = 4.5$, and $\lambda_{\max}=\lambda_m + 0.1$. In all experiments we iterate for a maximum of $100$ outer iterations without early stopping, so that the full convergence history is recorded.

\paragraph{Standard eigenproblem: noise model.}
{\cblue For the standard eigenproblem ($\bB=\bI$), the matrix is perturbed directly: we form $\tilde{\bA}=\bA+\varepsilon\,\bE$, where $\bE$ is a random symmetric matrix normalized so that $\norm{\bE}_2=1$ and $\varepsilon\geq 0$ controls the perturbation magnitude. The approximate inverse is set to $\bD^{-1}=\bI$ (exact). During the Chebyshev filtering step the matrix--vector products use the perturbed~$\tilde{\bA}$, while the Rayleigh--Ritz projection and residual norms are evaluated with the exact~$\bA$. This models scenarios in which the matrix--vector product in the polynomial filter is executed in lower precision---for example, using mixed-precision arithmetic---while the Rayleigh--Ritz step retains full-precision arithmetic. We consider four perturbation levels, $\varepsilon\in\{0,\,10^{-4},\,10^{-3},\,10^{-2}\}$.\cn}

\paragraph{Generalized eigenproblem: construction of~$\bB$ and approximate inverse.}
For the generalized eigenproblem $\bA\bx=\lambda\bB\bx$, we additionally construct a symmetric positive-definite matrix~$\bB$ using a \emph{second} random orthogonal basis $\bQ_B$ (same seed as $\bQ$, yielding a shared eigenbasis so that the generalized eigenvalues are exactly $\lambda_j^{(\mathrm{gen})}=a_j/b_j$). The eigenvalues of~$\bB$ are linearly spaced in the interval $[1,5]$, i.e.\ $b_j = 1 + 4(j-1)/(m-1)$ for $j=1,\dots,m$, giving $\kappa_2(\bB)=b_m/b_1=5$. The exact inverse is then $\bB^{-1}=\bQ_B\,\mathrm{diag}(1/b_1,\dots,1/b_m)\,\bQ_B^\top$, and the approximate inverse is formed as
\[
\bD^{-1} = \bB^{-1} + \zeta\,\bE, \qquad \bE = \tfrac{1}{2}(\bF+\bF^\top),\quad F_{ij}\sim\mathcal{N}(0,1),
\]
where $\bE$ is a random symmetric perturbation normalized so that $\norm{\bE}_2 = 1$, and $\zeta\geq 0$ controls the perturbation magnitude. By construction, $\norm{\bD^{-1}-\bB^{-1}}_2 = \zeta$ exactly. This construction directly models the scenario encountered in practice when $\bB^{-1}$ is replaced by a lumped-mass or incomplete-factorization approximations. We test four levels, $\zeta\in\{0,\,10^{-4},\,10^{-3},\,10^{-2}\}$.

\subsubsection{Standard eigenproblem with perturbed matrix (\texorpdfstring{$\bB=\bI$}{B=I})}\label{sec:toyStd}
{\cblue \Cref{fig:ToyStandard} shows the convergence of ChFSI and R-ChFSI for the standard eigenproblem with matrix perturbation levels $\varepsilon\in\{0,\,10^{-4},\,10^{-3},\,10^{-2}\}$. Without any perturbation ($\varepsilon=0$) both methods converge identically, as expected from their algebraic equivalence when the same matrix is used everywhere.  For every nonzero perturbation level, however, the two methods diverge sharply: ChFSI stagnates at a subspace angle and residual of $\mathcal{O}(\varepsilon)$ once the Chebyshev filter can no longer distinguish the wanted subspace from the noise introduced by~$\tilde{\bA}$, while R-ChFSI converges to the exact eigenvectors of the unperturbed~$\bA$ down to machine precision.  The explanation mirrors the generalized-problem analysis of \cref{sec:apprxSubspaceConstr}: in the residual-based recurrence, the noise from~$\tilde{\bA}$ enters proportional to the current residual $\norm{\bR^{(i)}}$, which vanishes during convergence, so the filtering error decays in tandem with the residual.  In the traditional recurrence, the same noise is applied to the eigenvectors themselves, whose norm remains $\mathcal{O}(1)$, producing a persistent $\mathcal{O}(\varepsilon)$ error.\cn}

\begin{figure}[!ht]
 \includegraphics[width=\textwidth]{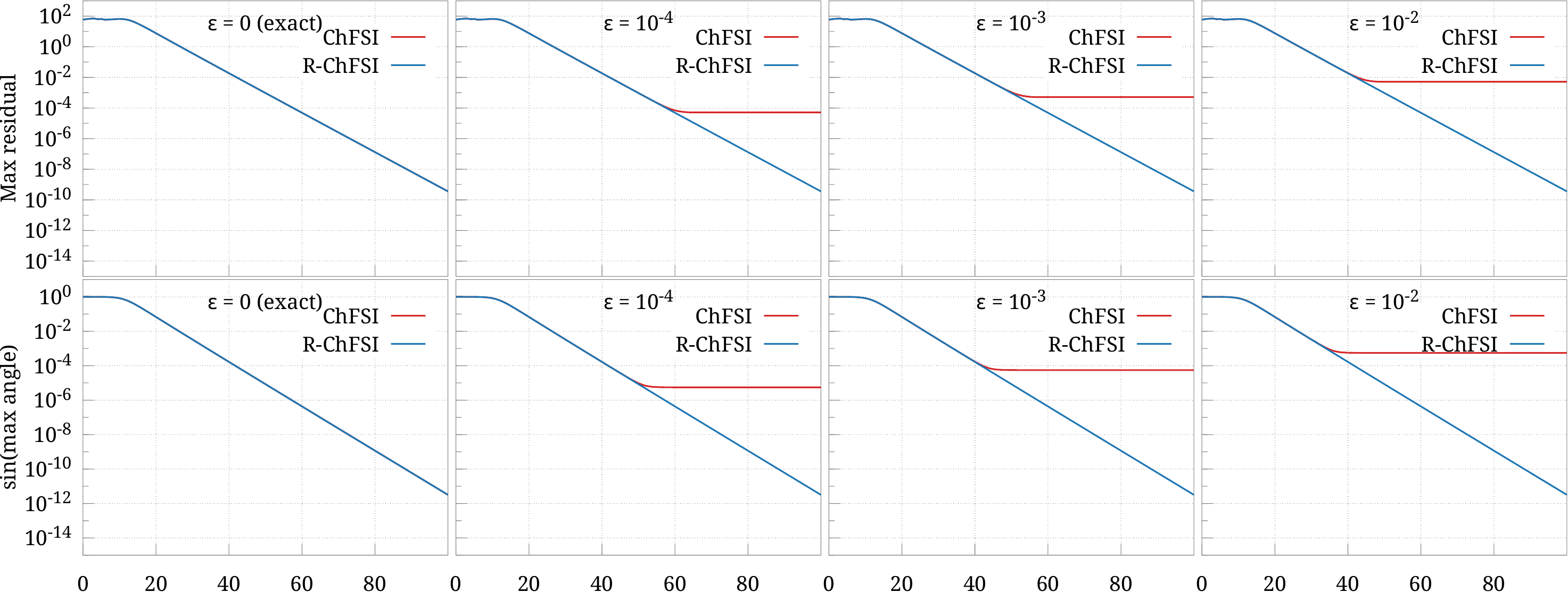}
 \caption{\cblue Convergence of ChFSI and R-ChFSI for the standard eigenproblem ($\bB=\bI$, $m=1000$, $n=10$, $p=8$, $\kappa_2(\bA)\approx 203$, spectral gap $\delta=1$) with matrix perturbation levels $\varepsilon=\norm{\tilde{\bA}-\bA}_2\in\{0,10^{-4},10^{-3},10^{-2}\}$ and exact inverse $\bD^{-1}=\bI$.  The Chebyshev filter uses~$\tilde{\bA}$; the Rayleigh--Ritz step and residual norms use exact~$\bA$.  For $\varepsilon>0$, ChFSI stagnates at a subspace angle of $\mathcal{O}(\varepsilon)$, while R-ChFSI converges to machine precision.\cn}\label{fig:ToyStandard}
\end{figure}

\subsubsection{Generalized eigenproblem with approximate \texorpdfstring{$\bB^{-1}$}{B⁻¹}}\label{sec:toyGen}
We now consider the generalized eigenproblem $\bA\bx=\lambda\bB\bx$ with a non-trivial overlap matrix~$\bB$ ($\kappa_2(\bB)=5$). Because $\bA$ and $\bB$ share the same eigenbasis by construction, the generalized eigenvalues $\lambda_j^{(\mathrm{gen})}=\lambda_j/b_j$ are fully determined by the prescribed spectra of $\bA$ and $\bB$, and the exact generalized eigenpairs are computed independently via LAPACK's \texttt{dsygv} to serve as ground truth. The approximation quality of $\bB^{-1}$ is controlled through $\zeta=\norm{\bD^{-1}-\bB^{-1}}_2$, varying from $0$ (exact inverse) to $10^{-2}$ (an approximation of this magnitude might arise from a coarse lumped-mass approximation).

\Cref{fig:ToyGeneralized} summarises the convergence behavior. Without approximation ($\zeta=0$), both methods converge identically to residuals of order $10^{-13}$. For moderate perturbation levels the two methods diverge markedly. At $\zeta=10^{-4}$ and $\zeta=10^{-3}$, ChFSI stagnates at residuals of $\mathcal{O}(\zeta)$, since the Chebyshev filtering step applies the operator $\bD^{-1}\bA$ repeatedly, amplifying the perturbation $\bD^{-1}-\bB^{-1}$ at each degree of the recurrence. R-ChFSI, on the other hand, reaches residuals of order $10^{-14}$ in both cases, confirming that the residual-based recurrence successfully decouples the converged accuracy from the quality of the approximate inverse.

For the largest perturbation tested ($\zeta=10^{-2}$), the contrast between the two methods remains striking: ChFSI stagnates at a subspace angle of $\mathcal{O}(\zeta)$, whereas R-ChFSI continues to converge monotonically, reaching a subspace angle of approximately $6\times 10^{-8}$ after $100$ iterations. Although the sufficient convergence condition~\cref{eqn:convIneqAppReq} is only marginally satisfied at this noise level, the residual-based recurrence still succeeds in driving the filtering error down in tandem with the residual.  This observation highlights that R-ChFSI substantially extends the range of approximation quality~$\zeta$ for which convergence is achievable, allowing machine-precision results with coarser approximations than ChFSI can tolerate.

\begin{figure}[!ht]
 \includegraphics[width=\textwidth]{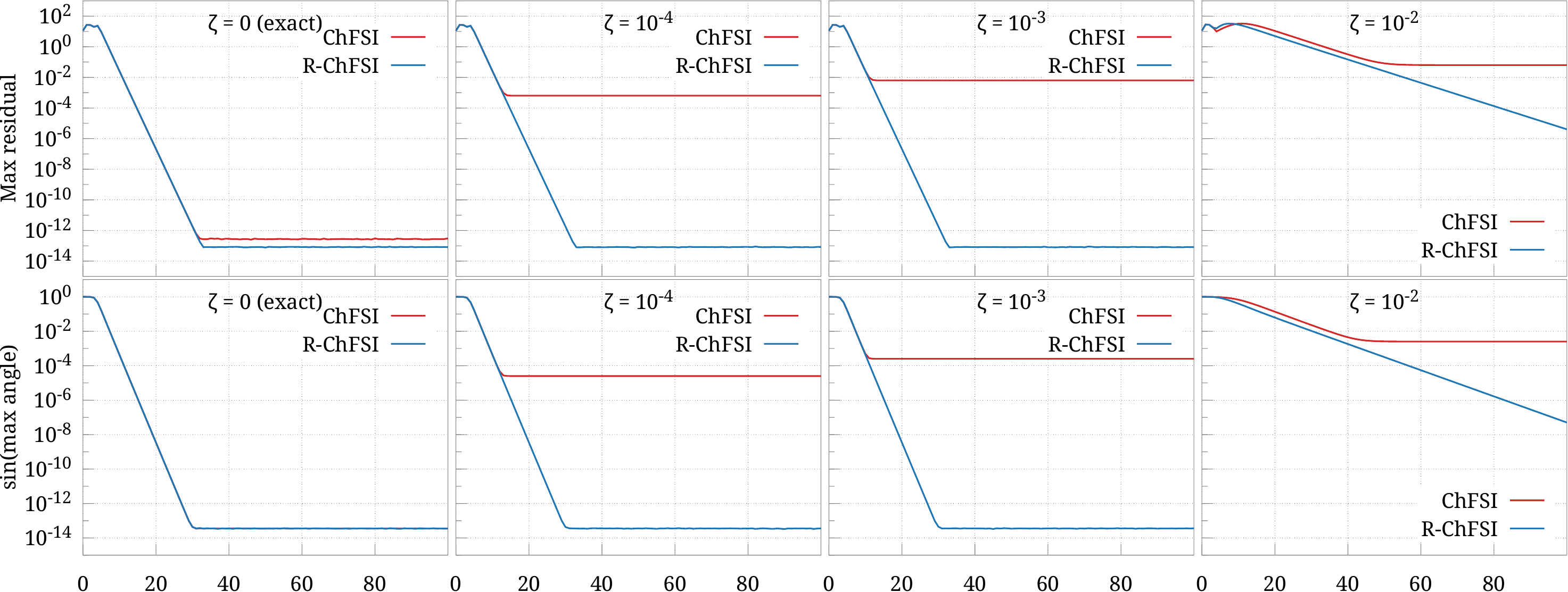}
 \caption{Convergence of ChFSI and R-ChFSI for the generalized eigenproblem ($m=1000$, $n=10$, $p=8$, $\kappa_2(\bB)=5$, spectral gap $\delta=1$) with varying approximation quality $\zeta=\norm{\bD^{-1}-\bB^{-1}}_2\in\{0,10^{-4},10^{-3},10^{-2}\}$.  For $\zeta\leq 10^{-3}$, R-ChFSI converges to machine precision while ChFSI stagnates at a residual of $\mathcal{O}(\zeta)$.  Even at $\zeta=10^{-2}$, R-ChFSI continues to converge (subspace angle $\approx 6\times 10^{-8}$ at iteration $100$) while ChFSI stagnates at $\mathcal{O}(\zeta)$.}\label{fig:ToyGeneralized}
\end{figure}

\subsubsection{Verification of the convergence condition}\label{sec:toyEq5}
The controlled setting of these experiments allows us to verify the convergence condition~\cref{eqn:convIneqAppReq} directly by computing both sides of the inequality at each iteration. We focus on the generalized problem with $\zeta=10^{-3}$, since this noise level is large enough to cause ChFSI to stagnate while still permitting R-ChFSI to converge, making the contrast most informative.

\Cref{fig:ToyEq5} (left panel) plots the filtering error $\norm{\hat{\bDelta}_p^{(i)}}=\norm{\bB^{1/2}\bDelta_p^{(i)}}$ as a function of the iteration index. For ChFSI, this quantity decreases initially (as the subspace rotates towards the eigenspace and the filter becomes more effective) but saturates at $\mathcal{O}(\zeta)$. It cannot decrease further because the filtering step accumulates a noise contribution of magnitude $\mathcal{O}(\zeta)$ at every degree of the Chebyshev recurrence, and this contribution is independent of the current subspace quality. For R-ChFSI, the filtering error tracks the residual norm and decreases monotonically to machine precision, since the recurrence filters residuals rather than eigenvectors. The effective noise injected at each step is therefore proportional to $\norm{\bR^{(i)}}$ and vanishes as convergence progresses.

The right panel of \cref{fig:ToyEq5} displays the left- and right-hand sides of the convergence condition~\cref{eqn:convIneqAppReq}. The left-hand side, $\abs{C_p(\lambda_n)}-\abs{C_p(\lambda_{n+1})}$, is a constant determined solely by the spectrum and the filter parameters. For ChFSI, the right-hand side $\norm{\hat{\bDelta}_p^{(i)}}\!\left(\sec{\angle(\mathcal{S}^{(i)},\mathcal{S})}+\csc{\angle(\mathcal{S}^{(i)},\mathcal{S})}\right)$ initially decreases as the subspace rotates into alignment with the eigenspace but then rises as the subspace angle diminishes further---the trigonometric factor $\sec\theta+\csc\theta$ grows while the filtering error $\norm{\hat{\bDelta}_p^{(i)}}$ stagnates at $\mathcal{O}(\zeta)$. The two curves cross, after which the condition is violated and the subspace angle can no longer decrease---the residual stagnates. For R-ChFSI, the filtering error decreases proportionally to the residual norm, which counteracts the growth of the trigonometric factor, so the right-hand side remains below the left-hand side throughout and convergence proceeds unimpeded to machine precision. These observations confirm, in a quantitative and fully reproducible manner, the central theoretical claim of \cref{sec:apprxSubspaceConstr}: the residual-based reformulation ensures that the filtering error decays in tandem with the residual, thereby preventing the stagnation inherent in the traditional Chebyshev recurrence.

\begin{figure}[!ht]
 \includegraphics[width=\textwidth]{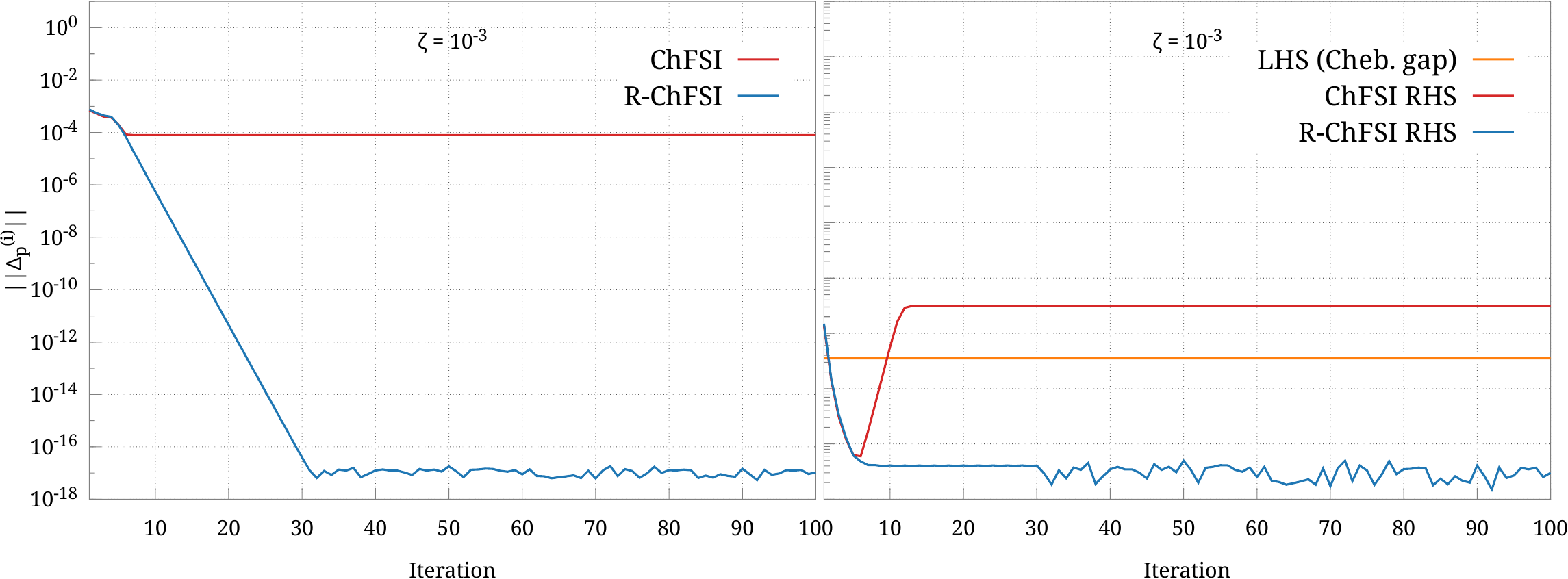}
 \caption{Diagnostic verification of the convergence condition (\cref{eqn:convIneqAppReq}) for the generalized eigenproblem with $\zeta=10^{-3}$, $m=1000$, $n=10$, $p=8$.  Left panel shows the filtering error $\norm{\hat{\bDelta}_p^{(i)}}=\norm{\bB^{1/2}\bDelta_p^{(i)}}$.  For ChFSI it saturates at $\mathcal{O}(\zeta)$, whereas for R-ChFSI it decays monotonically to machine precision.  Right panel plots the left-hand side (constant Chebyshev gap $\abs{C_p(\lambda_n)}-\abs{C_p(\lambda_{n+1})}$) and the right-hand side ($\norm{\hat{\bDelta}_p^{(i)}} (\sec\theta+\csc\theta)$) of~\cref{eqn:convIneqAppReq}.  The condition is eventually violated for ChFSI, whereas it remains satisfied for R-ChFSI throughout.}\label{fig:ToyEq5}
\end{figure}

These controlled experiments establish that the advantages of R-ChFSI over ChFSI arise directly from the algebraic structure of the residual-based recurrence and its robustness to any mismatch between the operator used for filtering and the one used for the Rayleigh--Ritz projection. In the standard eigenproblem, when the Chebyshev filter uses an approximate~$\tilde{\bA}$ while the projection uses exact~$\bA$, ChFSI stagnates at a residual of $\mathcal{O}(\varepsilon)$ whereas R-ChFSI converges to machine precision, since the filtering noise enters the residual-based recurrence proportionally to the residual norm and therefore vanishes during convergence. In the generalized eigenproblem, where $\bD^{-1}\neq\bB^{-1}$, the same mechanism applies: ChFSI stagnates at $\mathcal{O}(\zeta)$ while R-ChFSI converges for all noise levels tested ($\zeta\leq 10^{-2}$), substantially extending the usable range of approximation quality. These observations are fully consistent with the theoretical predictions of \cref{sec:apprxSubspaceConstr}. Having validated the theory on problems with precisely controlled errors, we now proceed to evaluate R-ChFSI on large-scale sparse eigenproblems arising in Kohn-Sham DFT.
\cn}
For all our benchmark studies reported in the remainder of this section, we consider the sparse matrix eigenvalue problems that arise from higher-order finite-element (FE) discretization of Kohn-Sham density functional theory (DFT)~\cite{chenAdaptiveFiniteElement2014,schauerAllelectronKohnSham2013,tsuchidaAdaptiveFiniteelementMethod1996,tsuchidaElectronicstructureCalculationsBased1995,dasDFTFE10Massively2022,motamarriDFTFEMassivelyParallel2020,paskFiniteElementMethods2005,motamarriHigherorderAdaptiveFiniteelement2013}, widely used in \emph{ab initio} modeling of materials. The Kohn-Sham DFT equations, discretized using a non-orthogonal basis, are a canonical example of the large generalized Hermitian nonlinear eigenvalue problem\footnote{The generalized eigenproblem arises in the finite-element discretization of DFT equations because of the non-orthogonality of the FE basis functions.} that motivates polynomial filtered-based methods. The nonlinear eigenvalue problem is solved as a sequence of linear eigenvalue problems using a self-consistent procedure (SCF). Each iteration of SCF requires a solution of the generalized Hermitian sparse eigenproblem for the lowest $n$ eigenpairs with $n<<m$, with $n$ proportional to the number of electrons in the material system and $m$ being the number of FE basis functions. Hence, we believe that Kohn-Sham DFT provides a demanding, realistic testbed for our algorithmic claims. For the sparse generalized eigenvalue problem of the form $\bA \bx = \lambda \bB \bx$ considered here, we explore the use of diagonal approximation of $\bB$ obtained by the mass lumping scheme, commonly employed in finite-element methods, to compute its inverse only during the subspace construction step within both the ChFSI and R-ChFSI methods, showing that the latter can achieve significantly lower residual tolerances. 


We now evaluate the accuracy and performance of the proposed R-ChFSI method for solving the aforementioned Hermitian definite generalized eigenvalue problems arising in DFT by replacing \cref{alg:chebFilt} in \cref{alg:ChFSI} by \cref{alg:RChFSiGhep}. {\cblue We note that the DFT-FE code already employs several mixed-precision strategies within the eigensolver---for example, mixed-precision in the Rayleigh--Ritz projection and the Gram--Schmidt orthogonalization. Our R-ChFSI modifications are complementary, as they target the \emph{filtering step} (sparse matrix--multi-vector products in the Chebyshev recurrence), which is the dominant cost for the problem sizes of interest and was previously performed entirely in FP64. The mixed-precision strategies already present in DFT-FE therefore remain active and unaffected, while R-ChFSI unlocks additional savings in the computationally most expensive part of the eigensolver.\cn} We implement the proposed R-ChFSI in the open-source code DFT-FE\footnote{https://github.com/dftfeDevelopers/dftfe}, a massively parallel finite-element code written in C++ for first principles based materials modeling using Kohn-Sham density functional theory (DFT). To this end, we consider a set of three benchmark material systems employing periodic boundary conditions, with the sparse matrix dimensions 1.728 million, 2.567 million, 85.766 million seeking 3000, 7000 and 13500 eigenpairs, respectively. These details are summarized in \cref{tab:problemSizes}. The three benchmark systems are chosen to be representative of a wide range of typical finite-element discretized DFT problems, considering the following factors: (a) differing spectral gap between wanted and unwanted {\cblue eigenspectrum\cn}, (b) differing number of target eigenpairs, and (c) both Gamma-point (real symmetric) and k-point (complex Hermitian) matrices. These distinctions matter for filtering-based eigensolvers. Systems with a small or zero spectral gap between wanted and unwanted spectra require higher amplification of the desired eigenspectrum and hence higher polynomial degree to separate near-degenerate states, while systems with a larger gap allow for lower polynomial degree and converge faster. Similarly, targeting a larger number of eigenvectors increases the cost of the Rayleigh-Ritz step (as the Rayleigh-Ritz step scales cubically, compared to the quadratic scaling of the filtering step) and alters the balance of where runtime is spent. Finally, k-point sampling of the Brillouin zone in DFT produces complex Hermitian matrices and different communication/compute patterns compared to Gamma-point runs, demonstrating that both cases ensure the conclusions apply to the typical varieties of DFT calculations encountered in practice.
\begin{table}[htbp]
\cblue\footnotesize
\caption{Dimensions and key spectral properties of the benchmark problems considered. These correspond to 3 material systems comprising (1) $6\times 6 \times 6$ supercell of Molybdenum, (2) $12\times 12 \times 12$ supercell of Silicon and (3) $15\times 15 \times 15$ supercell of Carbon, each with a single vacancy. The spectral gap is defined as $\delta=\lambda_{n+1}-\lambda_n$ where $\lambda_n$ and $\lambda_{n+1}$ are the largest wanted and smallest unwanted eigenvalues, respectively. Subspace fraction denotes $n/m$.}\label{tab:problemSizes}
\begin{center}
 \begin{tabular}{|c|c|c|c|c|c|} \hline
 System & \# of DoFs ($m$) & \# of wanted eigenvectors ($n$) & Subspace dim.\ & Subspace fraction ($n/m$) & Spectral gap ($\delta$)\\ \hline
(1)  & 1728000&3000 &3600 & $1.7\times 10^{-3}$ & $\sim 0$ (metallic)\\
    \hline
(2) & 25672375&7000 &8400 & $2.7\times 10^{-4}$ & $\sim 0.03$ Ha\\
    \hline
 (3) & 85766121&13500 &16800 & $1.6\times 10^{-4}$ & $\sim 0.18$ Ha\\ \hline
 \end{tabular}
\end{center}
\cn\end{table}

We will now demonstrate that the R-ChFSI method converges to a significantly lower residual tolerance than the ChFSI method when the diagonal approximation is employed for approximating the inverse overlap matrix during the subspace construction. To this end we will first consider the ChFSI and R-ChFSI algorithms for the benchmark systems summarized in \cref{tab:problemSizes} using various Chebyshev polynomial filter degrees and compare the residual norm ($\max_j r_j^{(i)}=\max_j\norm{\bA\bx_j^{(i)}-\epsilon_j^{(i)}\bB\bx_j^{(i)}}$) that can be achieved for both the methods. We pick a target residual norm criteria as $\max_j r_j^{(i)}<10^{-8}$, a typical tolerance employed in DFT calculations. We subsequently proceed to analyse the behavior of the R-ChFSI method when low-precision arithmetic is employed during the subspace construction. To this end, we benchmark the robustness and performance of the R-ChFSI algorithm when employing FP32 arithmetic. To this end, in \cref{alg:RChFSiGhep} we store $\bR_{\bX}$ and $\bR_{\bY}$ in FP32 and the computation of $\bA\bD^{-1}\bR_{\bY}$ is done using FP32 arithmetic. Further, we also employ TF32 tensor cores (by using \texttt{oneapi::mkl::blas::compute\_mode::float\_to\_tf32}), and we report the residual norms achieved by the FP32 and TF32 variants of R-ChFSI for our benchmark systems for various values of the Chebyshev polynomial degree. We further observe that using BF16 arithmetic does not provide any noticeable performance improvement during the filtering step and instead slows down the convergence. We therefore perform the computation using TF32 tensor cores and the nearest-neighbour MPI communication required during FE-discretized matrix multi-vector products in BF16 precision; we term this method TF32B for convenience of notation. This further improves the performance of the eigensolver, as will be demonstrated subsequently. For all benchmarking studies, we consider two cases, namely real symmetric and complex Hermitian eigenvalue problems.

\subsection{Real Symmetric Eigenvalue Problems}
We now consider the finite-element discretization of the Kohn-Sham DFT equations sampled at the origin (Gamma-points) of the Brillouin zone for the systems described in \cref{tab:problemSizes}. This ensures that the resulting discretized equation is a real symmetric eigenvalue problem.
\begin{figure}[!ht]
 \includegraphics[width=\textwidth]{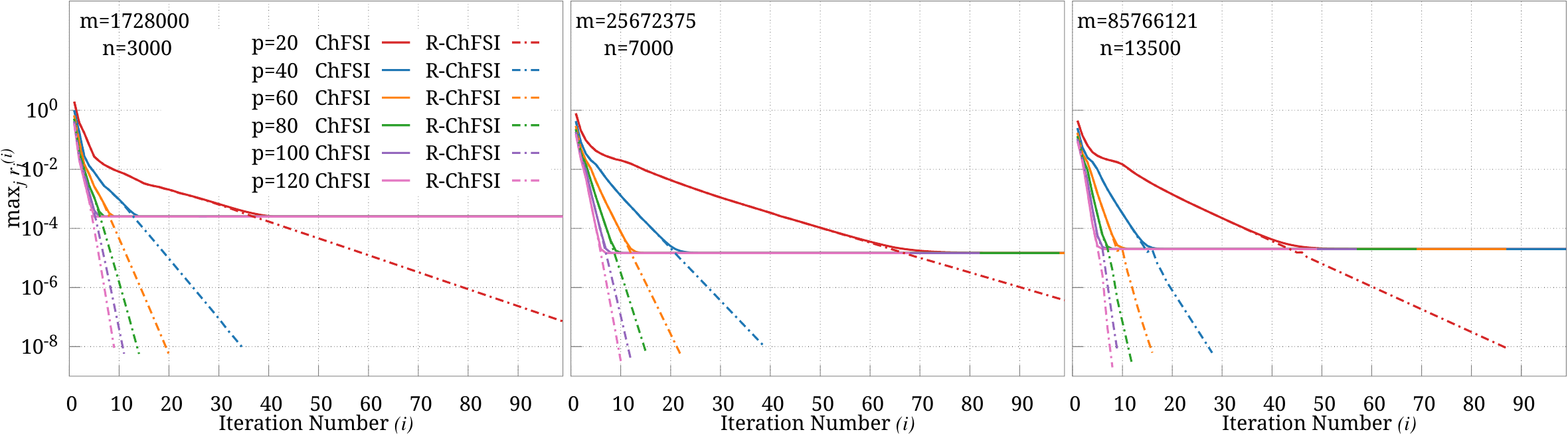}
 \caption{Plot of $\max_j r_j^{(i)}=\max_j\norm{\bA\bx_j^{(i)}-\epsilon_j^{(i)}\bB\bx_j^{(i)}}$ as the iterations progress for the ChFSI method and the R-ChFSI method (both in FP64 arithmetic) with various values of the Chebyshev polynomial filter degree ($p$ in Algorithm 2 and 3) to solve the symmetric generalized eigenvalue problem for the benchmark systems described in \cref{tab:problemSizes}.}\label{fig:ResRealGHEP}
\end{figure}
\begin{figure}[!t]
 \includegraphics[width=\textwidth]{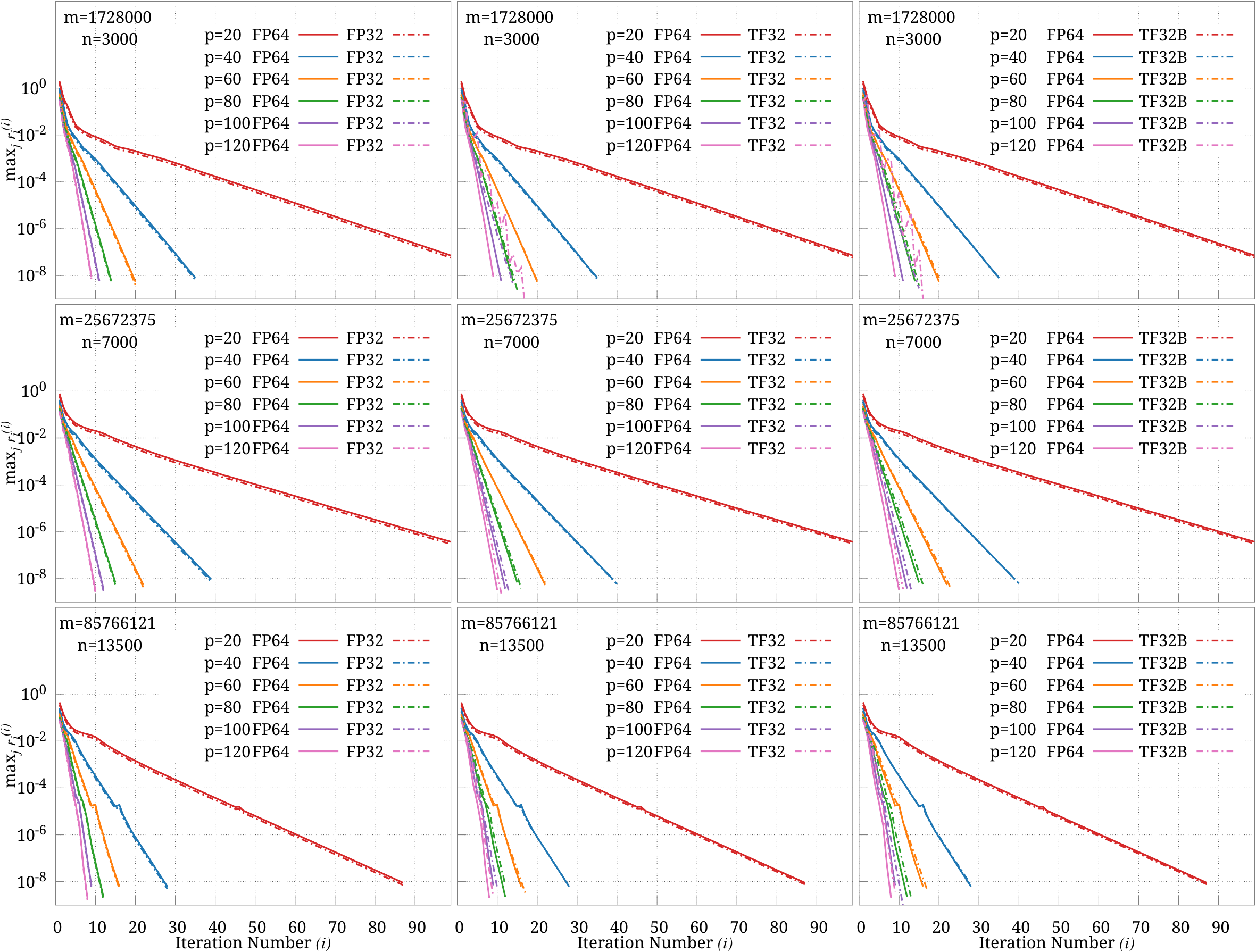}
 \caption{Plot of $\max_j r_j^{(i)}=\max_j\norm{\bA\bx_j^{(i)}-\epsilon_j^{(i)}\bB\bx_j^{(i)}}$ as the iterations progress for the R-ChFSI method to solve the symmetric generalized eigenvalue problem with various precisions for the benchmark systems described in \cref{tab:problemSizes}. A slight offset has been added to the lower precision results for ease of visualization.}\label{fig:ResRealGHEPLP}
 \vspace{0.15in}
 \includegraphics[width=\textwidth]{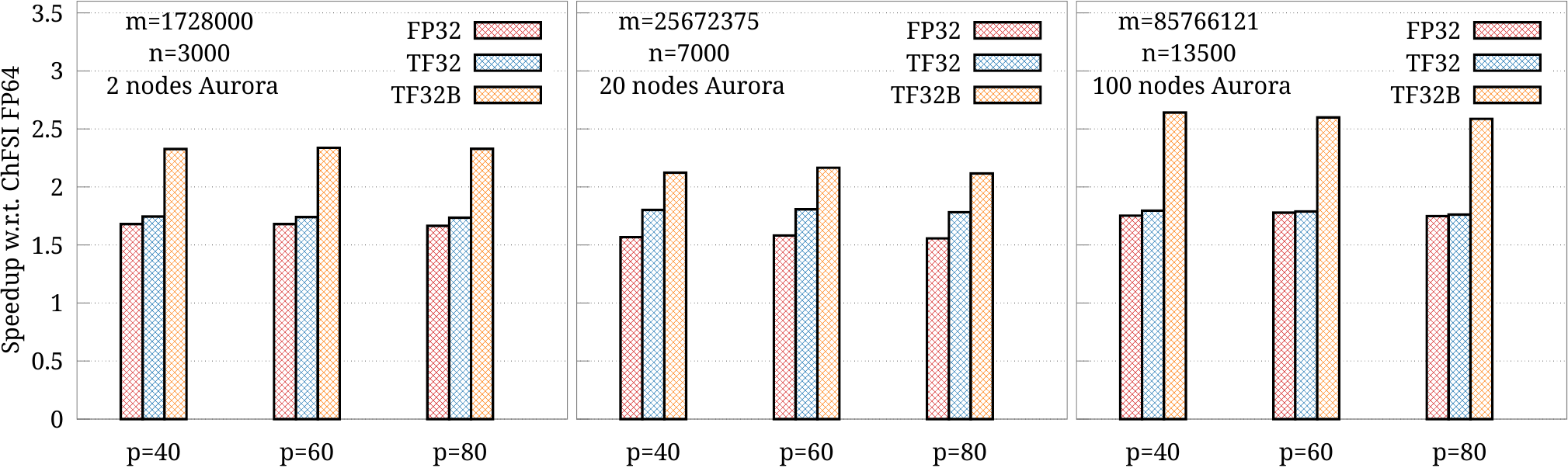}
 \caption{Speedups of lower precision R-ChFSI methods over the FP64 R-ChFSI method for subspace construction to solve the symmetric generalized eigenvalue problem for the benchmark systems described in \cref{tab:problemSizes}.}\label{fig:PerfRealGHEPGPU}
\end{figure}

From \cref{fig:ResRealGHEP}, we observe that for all the benchmark systems summarized in \cref{tab:problemSizes} and various Chebyshev polynomial degrees, the residual tolerance that can be achieved by the R-ChFSI method is orders of magnitude lower than what can be achieved using the ChFSI method when employing FP64 arithmetic and the diagonal approximation for the inverse overlap matrix in the subspace filtering step. This observation is consistent with the analysis done in \cref{sec:apprxSubspaceConstr}. As previously discussed, the R-ChFSI algorithm enables us to construct the filtered subspace using lower precision arithmetic. We report the residual norms ($\max_j r_j^{(i)}=\max_j\norm{\bA\bx_j^{(i)}-\epsilon_j^{(i)}\bB\bx_j^{(i)}}$) achieved by the FP32, TF32 and TF32B variants of R-ChFSI in \cref{fig:ResRealGHEPLP}. We note that the residual norms obtained using the FP32, TF32 and TF32B variants are comparable to those obtained using the FP64 variant for Chebyshev polynomial degrees of $p=20,40,60,80$ but not for $p=100,120$. As demonstrated by our controlled experiments in \cref{sec:toyEq5}, this can be attributed to the convergence condition of \cref{eqn:convIneqAppReq} being violated at higher polynomial degrees, where the accumulated filtering error $\norm{\hat{\bDelta}_p^{(i)}}$ exceeds the left-hand side $\abs{C_p(\lambda_n)}-\abs{C_p(\lambda_{n+1})}$. 

\begin{figure}[!ht]
 \includegraphics[width=\textwidth]{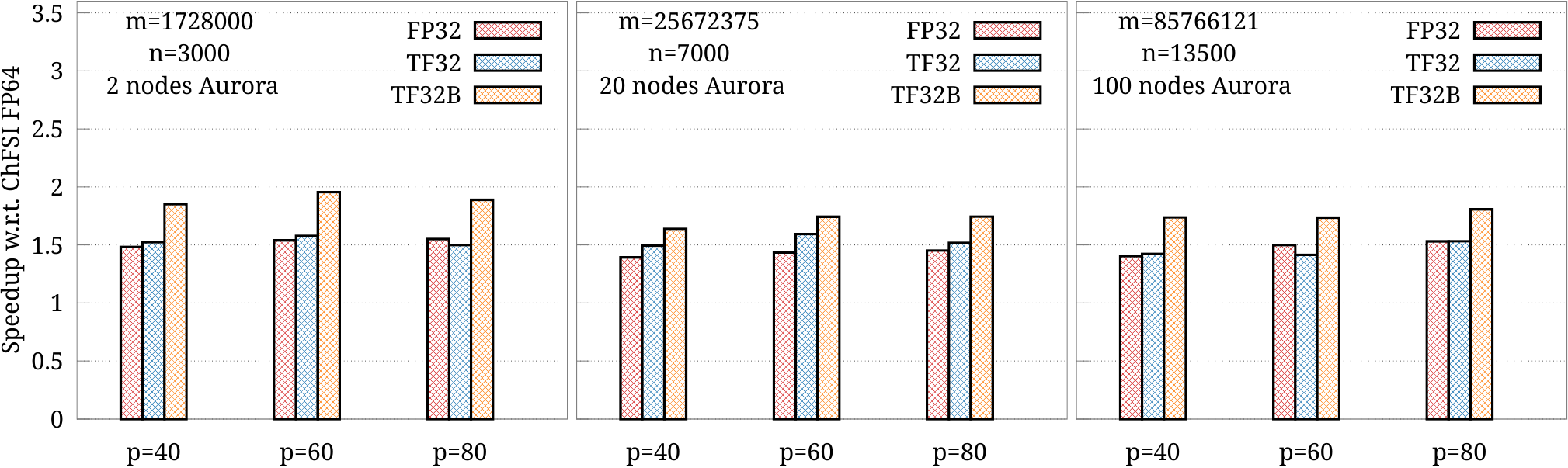}
 \caption{Speedups of lower precision R-ChFSI methods over the FP64 R-ChFSI method for the eigensolver to reach $\max_j r_j^{(i)}=\max_j\norm{\bA\bx_j^{(i)}-\epsilon_j^{(i)}\bB\bx_j^{(i)}}<10^{-8}$ to solve the symmetric generalized eigenvalue problem for the benchmark systems described in \cref{tab:problemSizes}.}\label{fig:PerfRealGHEPTotGPU}
\end{figure}

In \cref{fig:PerfRealGHEPGPU}, we report the speedups achieved for the filtering step for Chebyshev polynomial degrees\footnote{Polynomial degree 20 is not shown in \cref{fig:PerfRealGHEPGPU} as it did not converge to the desired tolerance within the time limit and degrees 100 and 120 are not shown as numerical instabilities start showing up at these polynomial degrees.} of $p=40,60,80$ by the FP32, TF32 and TF32B variants of R-ChFSI over the FP64 variant of R-ChFSI on Intel Data Center GPU Max Series accelerators deployed in the Aurora supercomputing system. We achieve speedups of up to 1.8x for the TF32 variant and 2.6x for the TF32B variant of the R-ChFSI method over the FP64 variant of the R-ChFSI method. We further report in \cref{fig:PerfRealGHEPTotGPU} the speedups achieved for the total eigensolve that includes both filtering and Rayleigh-Ritz step (with the convergence criteria $\max_j r_j^{(i)}=\max_j\norm{\bA\bx_j^{(i)}-\epsilon_j^{(i)}\bB\bx_j^{(i)}}<10^{-8}$) for Chebyshev polynomial degrees of $p=40,60,80$ by the FP32, TF32 and TF32B variants of R-ChFSI over the FP64 variant of R-ChFSI on Intel Data Center GPU Max Series accelerators deployed in the Aurora supercomputing system. For the full eigensolve, we achieve speedups of up to 1.6x for the TF32 and 2.0x for the TF32B variant of the R-ChFSI method over the FP64 variant of the R-ChFSI method.

\begin{remark}[Architecture independence of the speedups]
{\cblue We emphasise that the performance gains reported above are not contingent on any FP64-throughput limitation of the underlying hardware. The computationally dominant operation in the filtering step---the sparse-matrix multi-vector product $\bA\bD^{-1}\bR_{\bY}$---is implemented in the finite-element setting as a sequence of element-level dense matrix--matrix multiplications dispatched via batched GEMM routines (\texttt{stridedBatchedGemm}/\texttt{batchGemm}), which leverage tensor-core hardware (or equivalent matrix engines, e.g.\ Intel XMX units) when operating in TF32. Nevertheless, the operation remains largely memory-bandwidth bound on modern architectures, so the primary performance lever is the reduction in data movement. Operating in lower precision reduces the number of bytes streamed through the memory hierarchy (64~bits per element in FP64 versus 32~bits in FP32/TF32), thereby directly alleviating the memory-bandwidth bottleneck irrespective of the ratio of peak FP64 to FP32 throughput. For TF32, the combination of reduced data movement and higher matrix-engine throughput yields the observed speedups. Consequently, even on nodes with ample double-precision performance the reduced-precision R-ChFSI variants will deliver a speedup that scales roughly with the ratio of data widths. The theoretical upper bound on the filtering-step speedup from a pure data-volume argument is $64/32=2{\times}$ for FP32 and TF32, consistent with the measured speedups of $1.5$--$1.8{\times}$ for the filtering step after accounting for latency and other overheads. The TF32B variant gains additional speedup because the nearest-neighbour MPI communication required during the finite-element matrix--vector product is performed in BF16 (16~bits), reducing inter-node message sizes by a factor of~$4$ relative to FP64.\cn}
\end{remark}

\subsection{Complex Hermitian Eigenvalue Problems}
We now consider the finite-element discretization of the Kohn-Sham DFT equations sampled at a non-zero $k$-point ~\cite{dasDFTFE10Massively2022} in the Brillouin zone for the systems described in \cref{tab:problemSizes}. The resulting discretized equation is now a complex Hermitian eigenvalue problem. 
\begin{figure}[!ht]
 \includegraphics[width=\textwidth]{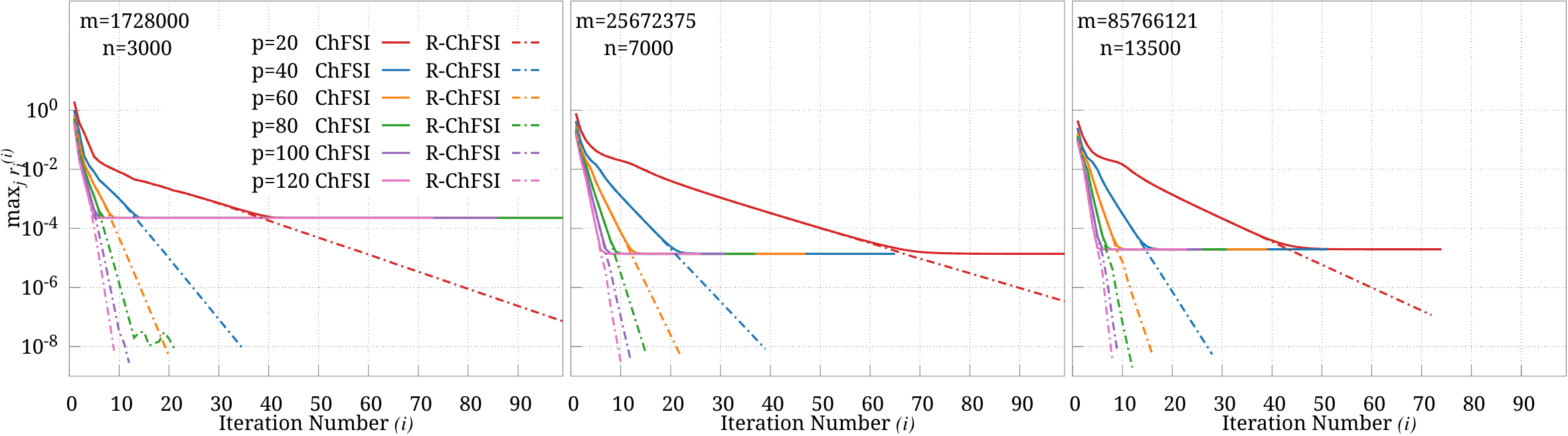}
 \caption{Plot of $\max_j r_j^{(i)}=\max_j\norm{\bA\bx_j^{(i)}-\epsilon_j^{(i)}\bB\bx_j^{(i)}}$ as the iterations progress for the ChFSI method and the R-ChFSI method to solve the Hermitian generalized eigenvalue problem for the benchmark systems described in \cref{tab:problemSizes}.}\label{fig:ResComplexGHEP}
\end{figure}
Even for Hermitian eigenvalue problems, \cref{fig:ResComplexGHEP} shows that across all benchmark systems summarized in \cref{tab:problemSizes} and for the range of Chebyshev polynomial degrees, the R-ChFSI method attains residual tolerances that are orders of magnitude lower than those achievable with the standard ChFSI method when using FP64 arithmetic and the diagonal approximation for the inverse overlap matrix in the filtering step.
\begin{figure}[!ht]
 \includegraphics[width=\textwidth]{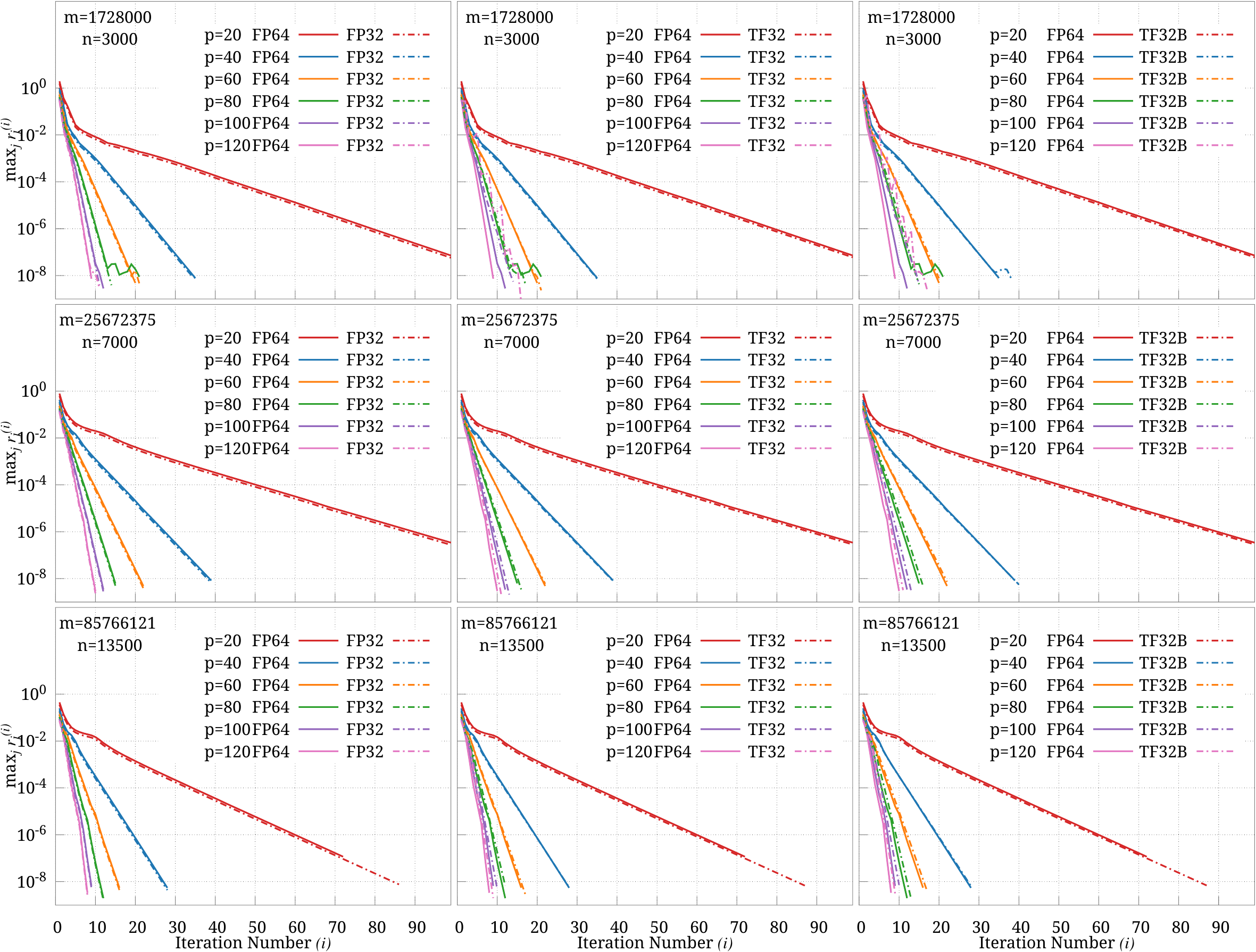}
 \caption{Plot of $\max_j r_j^{(i)}=\max_j\norm{\bA\bx_j^{(i)}-\epsilon_j^{(i)}\bB\bx_j^{(i)}}$ as the iterations progress for the R-ChFSI method to solve the Hermitian generalized eigenvalue problem with various precisions for the benchmark systems described in \cref{tab:problemSizes}. A slight offset has been added to the lower precision results for ease of visualization.}\label{fig:ResComplexGHEPLP}
 \end{figure}
We also note that our proposed R-ChFSI algorithm allows us to construct the filtered subspace using lower precision for the complex Hermitian eigenproblem, and we report the residuals achieved by the FP32, TF32 and TF32B variants of R-ChFSI in \cref{fig:ResComplexGHEPLP}. We note that the residual norms obtained using the FP32, TF32 and TF32B variants are comparable to those obtained using the FP64 variant for Chebyshev polynomial degrees of $p = 20, 40, 60, 80$. 
We also report the speedups achieved for the filtering step for Chebyshev polynomial degrees of $p=40,60,80$ by the FP32, TF32 and TF32B variants of R-ChFSI over the FP64 variant of R-ChFSI in \cref{fig:PerfComplexGHEPGPU} on Intel Data Center GPU Max Series accelerators deployed in the Aurora supercomputing system. We achieve speedups of up to 2.3x for the TF32 variant and 2.7x for the TF32B variant of the R-ChFSI method over the FP64 variant of the R-ChFSI method. We further report the speedups achieved for the total eigensolve (with the convergence criteria $\max_j r_j^{(i)}=\max_j\norm{\bA\bx_j^{(i)}-\epsilon_j^{(i)}\bB\bx_j^{(i)}}<10^{-8}$) for Chebyshev polynomial degrees of $p=40,60,80$ by the FP32, TF32 and TF32B variants of R-ChFSI over the FP64 variant of R-ChFSI in \cref{fig:PerfComplexGHEPTotGPU} on Intel Data Center GPU Max Series accelerators deployed in the Aurora supercomputing system. For the full eigensolve, we achieve speedups of up to 1.9x for the TF32 and 2.1x for the TF32B variant of the R-ChFSI method over the FP64 variant of the R-ChFSI method.
 \begin{figure}[!ht]
 \includegraphics[width=\textwidth]{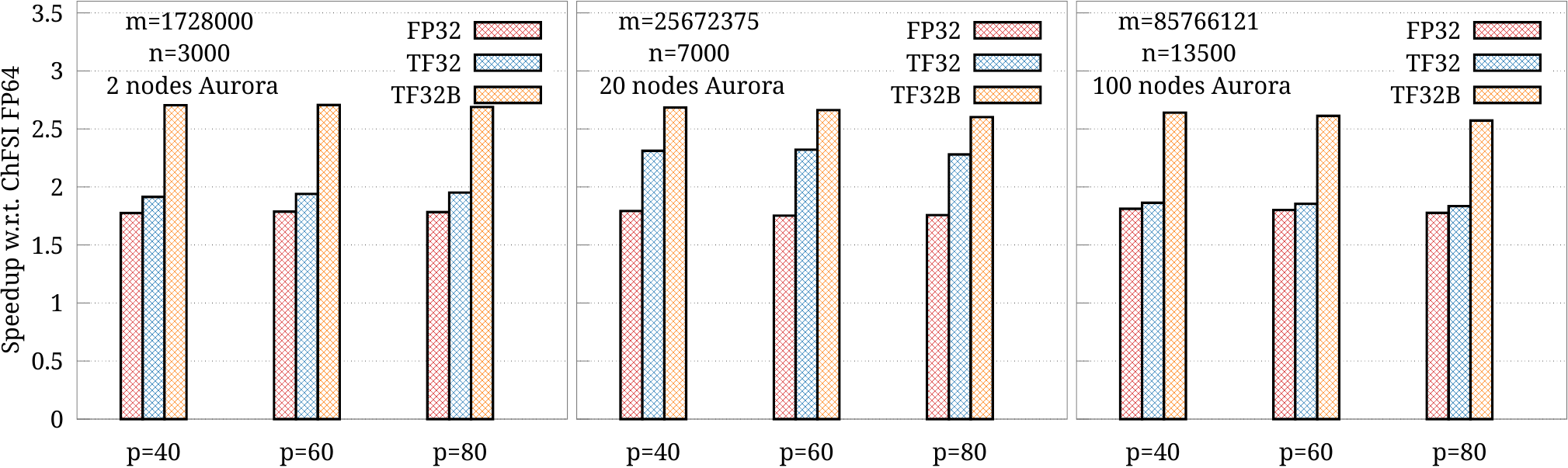}
 \caption{Speedups of lower precision R-ChFSI methods over the FP64 R-ChFSI method for subspace construction to solve the Hermitian generalized eigenvalue problem for the benchmark systems described in \cref{tab:problemSizes}.}\label{fig:PerfComplexGHEPGPU}
\end{figure}
\begin{figure}[!ht]
 \includegraphics[width=\textwidth]{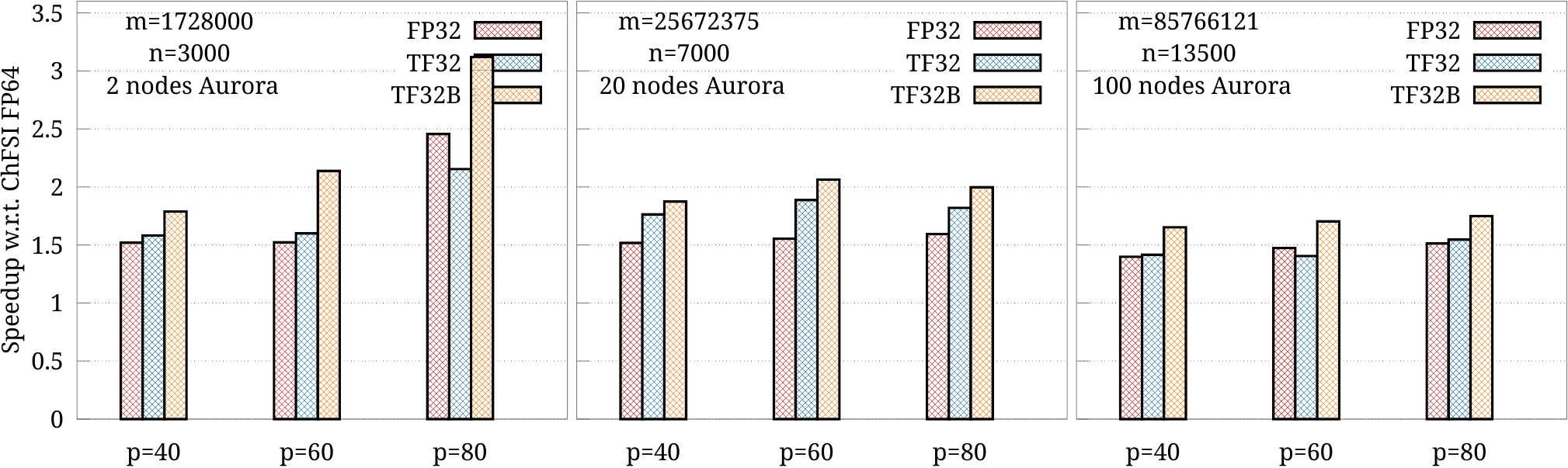}
 \caption{Speedups of lower precision R-ChFSI methods over the FP64 R-ChFSI method for the eigensolver to reach $\max_j r_j^{(i)}=\max_j\norm{\bA\bx_j^{(i)}-\epsilon_j^{(i)}\bB\bx_j^{(i)}}<10^{-8}$ to solve the Hermitian generalized eigenvalue problem for the benchmark systems described in \cref{tab:problemSizes}.}\label{fig:PerfComplexGHEPTotGPU}
\end{figure}

{\cblue \subsubsection{Summary: complex Hermitian vs.\ real symmetric}
Comparing the complex Hermitian results above with the real symmetric results in the preceding subsection, three observations stand out. First, the \emph{accuracy} benefit of R-ChFSI over ChFSI is equally pronounced in both cases: the residual-based reformulation reduces the stagnation floor by several orders of magnitude regardless of whether the matrices are real or complex, confirming that the theoretical guarantees of \cref{sec:apprxSubspaceConstr} hold without modification for the complex Hermitian setting. Second, the lower-precision variants (FP32, TF32, TF32B) remain as robust for complex matrices as for real ones, converging to the same residual tolerances at polynomial degrees $p\leq80$. Third, the \emph{performance} speedups are uniformly larger in the complex case (up to $2.3{\times}$ for TF32 and $2.7{\times}$ for TF32B in the filtering step, versus $1.8{\times}$ and $2.6{\times}$ in the real case). This is expected: complex arithmetic doubles the number of floating-point operations and the volume of data per matrix element, so halving the precision yields a proportionally greater reduction in both memory traffic and computation. The full-eigensolve speedups ($1.9{\times}$/$2.1{\times}$ for complex vs.\ $1.6{\times}$/$2.0{\times}$ for real) follow the same trend, with the slightly smaller gains reflecting the fixed FP64 cost of the Rayleigh--Ritz projection. Overall, these results demonstrate that R-ChFSI with low-precision arithmetic is effective across the full range of DFT calculations---both $\Gamma$-point (real symmetric) and $k$-point (complex Hermitian).\cn}

\section{Conclusion}
In this work, we have proposed a residual-based Chebyshev filtered subspace iteration (R-ChFSI) method designed for large-scale Hermitian eigenvalue problems, with a particular emphasis on robustness to inexact matrix–vector products. Through both mathematical analysis and extensive numerical experiments, we have demonstrated that the proposed method outperforms the standard Chebyshev filtered subspace iteration (ChFSI) approach significantly. Our theoretical justification demonstrates how reformulating the Chebyshev recurrence to operate on residuals, rather than directly on eigenvector estimates, effectively suppresses numerical errors introduced by inexact operator applications. This leads to a more reliable construction of the filtered subspace when compared with the traditional ChFSI recurrence relation.

An important contribution of this work lies in demonstrating the effectiveness of R-ChFSI for generalized Hermitian eigenvalue problems of the form $\bA \bx = \lambda \bB \bx$. In this setting, the proposed method is particularly impactful due to its ability to incorporate approximate inverses of $\bB$ when constructing the subspace rich in the desired eigenspace using the Chebyshev polynomial recurrence relation. By replacing the exact computation of matrix inverses with computationally inexpensive approximate inverses (e.g., diagonal or block-diagonal approximations of $\bB$), the method not only reduces the overall computational burden but also maintains convergence robustness. This results in substantial performance gains, as the proposed approach achieves residual tolerances that are orders of magnitude lower than those obtainable with traditional ChFSI when solving the generalized problems. This highlights the method’s robustness and its suitability for large-scale discretized generalized eigenproblems where exact factorizations of \(\bB\) are prohibitively expensive.

Another important aspect of the proposed approach is its natural compatibility with low-precision arithmetic. By design, R-ChFSI maintains stable convergence even when matrix–vector products are evaluated using lower-precision formats such as FP32 or TF32. Our results demonstrate that the method retains residual tolerances comparable to those obtained using full double precision, while benefiting from the reduced computational cost and increased throughput associated with low-precision operations. This makes R-ChFSI particularly attractive for emerging high-performance computing platforms, including GPU-accelerated systems and specialized tensor-processing hardware, which increasingly prioritize mixed-precision arithmetic to deliver high performance. The ability to jointly leverage approximate inverses and low-precision arithmetic allows R-ChFSI to achieve significant computational speedups while preserving numerical robustness.

The proposed R-ChFSI methodology has the potential to significantly impact a wide range of computational physics applications. These include \emph{ab initio} modeling of materials, structural mechanics, fluid dynamics, and multiphysics problems, all of which rely on the efficient solution of large Hermitian eigenvalue problems arising from discretizations of partial differential equations using finite-element, finite-difference, wavelets or finite-volume formulations. By enabling the effective use of low-precision arithmetic without sacrificing accuracy, the method provides a compelling alternative for large-scale simulations that require substantial computational resources. Moreover, its robustness in solving large generalized nonlinear eigenvalue problems opens new opportunities for applying R-ChFSI to domains such as quantum modeling of materials, and other areas that demand scalable and reliable eigensolvers.

\section{Acknowledgements}
The authors gratefully acknowledge the seed grant from the Indian Institute of Science and the SERB Startup Research Grant from the Department of Science and Technology, Government of India (Grant Number: SRG/2020/002194). This research used resources of the Argonne Leadership Computing Facility (ALCF), a US Department of Energy (DOE) Office of Science user facility at Argonne National Laboratory (ANL) and is based on research supported by the U.S DOE Office of Science-Advanced Scientific Computing Research Program, under Contract No. DE-AC02-06CH11357. PM acknowledges Vishwas Rao from ANL for useful discussions on the manuscript. The research also used the resources of PARAM Pravega at the Indian Institute of Science, supported by the National Supercomputing Mission (NSM), for testing a few of the benchmarks on CPU clusters. N.K. and K.R. would like to acknowledge the Prime Minister Research Fellowship (PMRF) from the Ministry of Education, India, for financial support. P.M. acknowledges the Google India Research Award 2023 for financial support during the course of this work.

\section{Declaration of Generative AI and AI-assisted technologies in the writing process}
During the preparation of this work the authors used Paperpal in order to proofread and Claude (Anthropic, via github copilot) in order to assist with coding for dense eigenproblem testing. After using these tools/services, the authors reviewed and edited the content as needed and take full responsibility for the content of the publication.

\section{Data availability}
The code and data associated with this article can be found online at \url{https://github.com/matrixlabiisc/R-ChFSI-Manuscript-Data}.
\newpage
\appendix
\vspace{0.1in} 
\begin{center}
\textbf{Appendix}
\end{center}
\section{Useful Lemmas}
\begin{lemma}
 If $\normalfont\bH\in\mathbb{C}^{m\times m}$ and $\normalfont\bX\in\mathbb{C}^{m\times n}$, the spectral norm of the error in evaluating the matrix product $\normalfont\bH\bX$ due to floating point approximations satisfies $\normalfont\norm{\bH\bX-{\bH\otimes\bX}}\leq\gamma_m\norm{\bH}\norm{\bX}$\label{lem:FPErrorMatmul}
\end{lemma}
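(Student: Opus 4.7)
The plan is to build the bound in three layers, moving from a single scalar inner product up to the full matrix product. First, I would invoke the standard IEEE-style floating-point model, $fl(a\oplus b) = (a+b)(1+\delta)$ and $fl(a\otimes b) = ab(1+\delta)$ with $|\delta|\leq u$, where $u$ is the unit roundoff. Applied recursively to the inner product $\sum_{k=1}^m H_{ik}X_{kj}$, evaluated in any fixed order, this yields the classical Higham-style bound $\bigl|fl\bigl(\sum_{k=1}^m H_{ik}X_{kj}\bigr) - \sum_{k=1}^m H_{ik}X_{kj}\bigr| \leq \gamma_m \sum_{k=1}^m |H_{ik}|\,|X_{kj}|$ with $\gamma_m = mu/(1-mu)$. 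This per-entry estimate is the cornerstone of the whole argument; it is established by induction on the length of the sum and accounts for one multiplicative rounding per product and one per partial sum.

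Second, I would assemble these entrywise bounds across all $(i,j)$ into the matrix inequality
\begin{equation*}
|\bH\bX - \bH\otimes\bX| \leq \gamma_m\,|\bH|\,|\bX|,
\end{equation*}
where $|\cdot|$ denotes the entrywise absolute value and the inequality is meant componentwise. For complex arithmetic the same form survives, with $\gamma_m$ inflated by a small constant to absorb the extra roundings inside each complex multiply–add; I would absorb that constant silently into $\gamma_m$ to keep the statement clean.

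Third, I would lift this componentwise inequality to the spectral norm. Using the fact that $\norm{A}_2 \leq \norm{|A|}_2$ for any complex matrix $A$, followed by submultiplicativity $\norm{|\bH||\bX|}_2 \leq \norm{|\bH|}_2\,\norm{|\bX|}_2$, gives $\norm{\bH\bX - \bH\otimes\bX}_2 \leq \gamma_m\,\norm{|\bH|}_2\,\norm{|\bX|}_2$, which is then identified with $\gamma_m\,\norm{\bH}\,\norm{\bX}$ under the convention used throughout the paper.

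The main obstacle is precisely this last identification: in general $\norm{|A|}_2$ can strictly exceed $\norm{A}_2$, so a completely tight derivation either routes through the Frobenius norm (where $\norm{|A|}_F = \norm{A}_F$ holds exactly and submultiplicativity still applies) and then uses $\norm{\cdot}_2 \leq \norm{\cdot}_F$, or absorbs a dimension-dependent factor into $\gamma_m$. I would take whichever bookkeeping convention the paper adopts; either way the stated inequality follows and, crucially for the downstream use in Theorem 3.3 and Theorem 3.5, the dependence on $\norm{\bX}$ gives a bound that vanishes with the residual $\bR^{(i)}$ in the R-ChFSI recurrence.
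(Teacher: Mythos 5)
Your proposal is correct and follows essentially the same route as the paper, whose proof simply defers to Higham (Sections 3.5--3.6) for exactly the componentwise bound $\abs{\bH\bX-\bH\otimes\bX}\leq\gamma_m\abs{\bH}\abs{\bX}$ that you reconstruct and then lift to the spectral norm. The dimension-dependent slack you rightly flag in the step from $\norm{\abs{\bH}\abs{\bX}}$ to $\norm{\bH}\norm{\bX}$ is precisely why the paper takes $\gamma_m=m^{2}\varepsilon_M/(1-m\varepsilon_M)$ for dense real matrices rather than the usual $m\varepsilon_M/(1-m\varepsilon_M)$, i.e., the factor is absorbed into $\gamma_m$ exactly as you propose.
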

\begin{proof}
 This is a straightforward application of the results derived in \cite[section 3.5]{highamAccuracyStabilityNumerical2002}. We note that $\gamma_m$ is a constant that depends on $m$, and the machine precision of the floating point arithmetic (denoted as $\varepsilon_M$). For the case of dense real matrices, we have $\gamma_m=m^{2}\varepsilon_M/(1-m\varepsilon_M)$ and for dense complex matrices, it needs to be appropriately modified as described in \cite[section 3.6]{highamAccuracyStabilityNumerical2002}. We also note that tighter bounds can be obtained from probabilistic error analysis \cite{ipsenProbabilisticErrorAnalysis2020,highamSharperProbabilisticBackward2020}. Further, we also note that in the case where $\bH$ is sparse, significantly tighter bounds independent of $m$ can be achieved depending on the specific implementations of the sparse matrix-dense matrix multiplication routines. 
\end{proof}
\begin{lemma}
 Consider a recurrence relation of the form
\begin{align}
\normalfont\bDelta_{k+1}^{(i)}=a_k\bH\bDelta_{k}^{(i)}+b_k\bDelta_{k}^{(i)}+c_k\bDelta_{k-1}^{(i)}+a_k\be_k^{(i)}\label{eqn:errRecGen}
\end{align}
with $\normalfont\norm{\be_k^{(i)}}\leq h_0\norm{\bDelta_{k}^{(i)}}+h_1^{(i)}\;\forall k=1,2,\dots,p$. We can then write
  \begin{align}
  \norm{\bDelta_{k+1}^{(i)}}\leq g_k^0\left(\norm{\bDelta_0^{(i)}}+\norm{\bDelta_1^{(i)}}+g_k^1h_1^{(i)}\right)
 \end{align}\label{thm:errGen}
\end{lemma}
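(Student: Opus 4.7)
The strategy is to reduce the matrix recurrence to a scalar recurrence in norms and then solve the resulting inhomogeneous three-term linear recurrence by strong induction on $k$. First, taking the spectral norm of both sides of \cref{eqn:errRecGen} and applying the triangle inequality together with sub-multiplicativity of the spectral norm, I would obtain
\[
\norm{\bDelta_{k+1}^{(i)}} \leq \bigl(\abs{a_k}\norm{\bH}+\abs{b_k}\bigr)\norm{\bDelta_k^{(i)}} + \abs{c_k}\norm{\bDelta_{k-1}^{(i)}} + \abs{a_k}\norm{\be_k^{(i)}}.
\]
Substituting the hypothesis $\norm{\be_k^{(i)}}\leq h_0\norm{\bDelta_k^{(i)}}+h_1^{(i)}$ and abbreviating $N_k := \norm{\bDelta_k^{(i)}}$, together with $\alpha_k := \abs{a_k}\norm{\bH}+\abs{b_k}+\abs{a_k}h_0$, $\beta_k := \abs{c_k}$ and $\gamma_k := \abs{a_k}$, then reduces the problem to the scalar recurrence
\[
N_{k+1} \leq \alpha_k N_k + \beta_k N_{k-1} + \gamma_k h_1^{(i)}.
\]

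Next, I would prove by strong induction on $k$ that there exist finite constants $g_k^0, g_k^1 \geq 0$, depending only on $k$ and the scalars $\alpha_\ell, \beta_\ell, \gamma_\ell$ for $\ell \leq k$, such that
\[
N_{k+1}\leq g_k^0\bigl(N_0 + N_1 + g_k^1 h_1^{(i)}\bigr).
\]
The base case $k=0$ is immediate, taking $g_0^0 = 1$ and $g_0^1 = 0$. For the inductive step, it is convenient to work with the equivalent decomposition $N_{k+1} \leq A_k(N_0+N_1) + B_k h_1^{(i)}$, where $A_k = g_k^0$ and $B_k = g_k^0 g_k^1$. Substituting the two inductive bounds into the scalar recurrence and separating the coefficient of $N_0+N_1$ from that of $h_1^{(i)}$ yields the linear two-term recursions $A_k = \alpha_k A_{k-1} + \beta_k A_{k-2}$ and $B_k = \alpha_k B_{k-1} + \beta_k B_{k-2} + \gamma_k$, from which $g_k^0$ and $g_k^1$ can be read off.

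The only real work here is the bookkeeping required to verify that the coefficients $A_k, B_k$ produced by the induction genuinely satisfy the claimed form. Since $\alpha_k, \beta_k, \gamma_k$ are finite for every fixed $k$ (inherited from the Chebyshev recurrence coefficients together with the assumed finiteness of $\norm{\bH}$), the constants $g_k^0, g_k^1$ generated remain finite at every step, which is exactly what the statement asserts. No spectral or sharp growth estimate on $g_k^0, g_k^1$ is required — the lemma only claims existence of finite constants, so the induction closes without additional subtlety, and no issue of convergence as $k \to \infty$ arises since $k$ is bounded above by $p$ in the intended applications.
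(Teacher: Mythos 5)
Your proposal is correct and follows essentially the same route as the paper: take spectral norms, absorb the hypothesis on $\norm{\be_k^{(i)}}$ into a scalar inhomogeneous three-term recurrence, and unroll it to exhibit finite constants $g_k^0$, $g_k^1$. The only cosmetic difference is that the paper organizes the unrolling by stacking $\bigl(\norm{\bDelta_{k}^{(i)}},\norm{\bDelta_{k-1}^{(i)}}\bigr)$ into a vector and bounding by products of companion-matrix norms $\norm{\bF_j}$, whereas you track the coefficients $A_k$, $B_k$ directly by strong induction --- and your coefficient $\abs{a_k}h_0$ in $\alpha_k$ is in fact the correct one (the paper's displayed inequality writes $h_0$ without the $\abs{a_k}$ factor, though its definition of $\bF_k$ restores it).
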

\begin{proof}
From \cref{eqn:errRecGen} we can write 
\begin{align*}
 \norm{\bDelta_{k+1}^{(i)}}&\leq \abs{a_k}\norm{\bH}\norm{\bDelta_{k}^{(i)}}+\abs{b_k}\norm{\bDelta_{k}^{(i)}}+\abs{c_k}\norm{\bDelta_{k-1}^{(i)}}+\abs{a_k}\norm{\be_k^{(i)}}\\
 &\leq(\abs{a_k}\norm{\bH}+\abs{b_k}+h_0)\norm{\bDelta_{k}^{(i)}}+\abs{c_k}\norm{\bDelta_{k-1}^{(i)}}+\abs{a_k}h_1^{(i)}
\end{align*}
 Defining the following 
 \begin{align*}
  \bd_k^{(i)}&=\begin{bmatrix}
   \norm{\bDelta_{k}^{(i)}}\\\norm{\bDelta_{k-1}^{(i)}}
  \end{bmatrix} & \bF_k&=\begin{bmatrix}
   \abs{a_k}\norm{\bH}+\abs{b_k}+\abs{a_k}h_0 & \abs{c_k}\\1 & 0
  \end{bmatrix} & \bE_k^{(i)}&=\begin{bmatrix}
   \abs{a_k}h_1^{(i)}\\0
  \end{bmatrix}
 \end{align*}
 we can now write
 \begin{align*}
  \norm{\bd_{k+1}^{(i)}}&\leq\norm{\bF_k\bd_{k}^{(i)}+\bE_k^{(i)}}\leq\norm{\bF_k}\norm{\bd_{k}^{(i)}}+\abs{a_k}h_1^{(i)}
 \end{align*}
 Consequently, we now have
 \begin{align*}
  \norm{\bd_{k+1}^{(i)}}&\leq\left(\prod_{j=1}^k\norm{\bF_j}\right)\norm{\bd_1^{(i)}}+\sum_{j=1}^k\left(\prod_{r=j+1}^k\norm{\bF_r}\right)\abs{a_j}h_1^{(i)}
 \end{align*}
 We note that $\norm{\bDelta_{k+1}^{(i)}}\leq\norm{\bd_{k+1}^{(i)}}$ and $\norm{\bd_1^{(i)}}\leq\norm{\bDelta_0^{(i)}}+\norm{\bDelta_1^{(i)}}$. This results in
 \begin{align*}
  \norm{\bDelta_{k+1}^{(i)}}\leq\left(\prod_{j=1}^k\norm{\bF_j}\right)\left(\norm{\bDelta_0^{(i)}}+\norm{\bDelta_1^{(i)}}\right)+\sum_{j=1}^k\left(\prod_{r=j+1}^k\norm{\bF_r}\right)\abs{a_j}h_1^{(i)}
 \end{align*}
 Defining $g_k^0=\prod_{j=1}^k\norm{\bF_j}$ we have
  \begin{align*}
  \norm{\bDelta_{k+1}^{(i)}}&\leq g_k^0\left(\norm{\bDelta_0^{(i)}}+\norm{\bDelta_1^{(i)}}\right)+\sum_{j=1}^k\frac{g_k^0}{g_j^0}\abs{a_j}h_1^{(i)}\\
  &=g_k^0\left(\norm{\bDelta_0^{(i)}}+\norm{\bDelta_1^{(i)}}+\sum_{j=1}^k\frac{\abs{a_j}}{g_j^0}h_1^{(i)}\right)
 \end{align*}
 Defining $g_k^1=\sum_{j=1}^k\frac{\abs{a_j}}{g_j^0}$ we have
  \begin{align*}
  \norm{\bDelta_{k+1}^{(i)}}&\leq g_k^0\left(\norm{\bDelta_0^{(i)}}+\norm{\bDelta_1^{(i)}}+g_k^1h_1^{(i)}\right)
 \end{align*}
\end{proof}
 \begin{lemma}
 The spectral norm of the residual, $\normalfont\bR_k^{(i)}=C_k(\bH)\bX^{(i)}-\bX^{(i)} C_k(\bLam^{(i)})$ for $k=0,\dots,p$, is bounded by $\normalfont\norm{\bR_k^{(i)}}\leq f_k\norm{\bR^{(i)}}$ where $f_k$ is a finite constant and $\normalfont\bR^{(i)}=\bH\bX^{(i)}-\bX^{(i)}\bLam^{(i)}$\label{lem:rkbound}
\end{lemma}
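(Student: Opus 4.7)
The plan is to derive a three-term recurrence for $\bR_k^{(i)}$ that mirrors the Chebyshev recurrence for $\bY_k^{(i)}$, and then apply \cref{thm:errGen} to the resulting recurrence. The key observation is that $\bLam^{(i)}$ commutes with any polynomial in itself, so all the complication happens on the $\bH$-side.

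First, I would write down the Chebyshev recurrence obeyed by $C_k(\cdot)$, apply it to both $\bH$ and $\bLam^{(i)}$, multiply the second by $\bX^{(i)}$ on the right, and subtract to obtain a recurrence for $\bR_k^{(i)}$. The only nontrivial manipulation is handling the leading $\bH$-term: by adding and subtracting $\bH\bX^{(i)} C_k(\bLam^{(i)})$ one gets
\begin{align*}
  a_k\bH C_k(\bH)\bX^{(i)} - a_k \bX^{(i)}\bLam^{(i)} C_k(\bLam^{(i)})
  = a_k \bH\bR_k^{(i)} + a_k \bR^{(i)} C_k(\bLam^{(i)}).
\end{align*}
Combining with the $b_k$ and $c_k$ terms gives the recurrence
\begin{align*}
  \bR_{k+1}^{(i)} = a_k \bH \bR_k^{(i)} + b_k \bR_k^{(i)} + c_k \bR_{k-1}^{(i)} + a_k \bR^{(i)} C_k(\bLam^{(i)}).
\end{align*}
The initial conditions follow directly from $C_0 = 1$ and $C_1(x) = \sigma_1 (x-c)/e$: I get $\bR_0^{(i)} = \bzero$ and $\bR_1^{(i)} = (\sigma_1/e)\bR^{(i)}$.

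Next, I would view this as an instance of the recurrence in \cref{thm:errGen} with $\bDelta_k^{(i)} \leftarrow \bR_k^{(i)}$ and forcing term $\be_k^{(i)} = \bR^{(i)} C_k(\bLam^{(i)})$. Since $\bLam^{(i)}$ is diagonal with real Ritz values lying in a bounded set, $\|C_k(\bLam^{(i)})\|$ is a finite scalar depending only on $k$ and the spectrum, so that $\|\be_k^{(i)}\| \leq \|C_k(\bLam^{(i)})\|\,\|\bR^{(i)}\|$, which fits the form $h_0\|\bDelta_k^{(i)}\| + h_1^{(i)}$ with $h_0 = 0$ and $h_1^{(i)} = \bigl(\max_{1 \leq j \leq k}\|C_j(\bLam^{(i)})\|\bigr)\,\|\bR^{(i)}\|$. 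Applying \cref{thm:errGen} with $\|\bDelta_0^{(i)}\| = 0$ and $\|\bDelta_1^{(i)}\| = (\sigma_1/e)\|\bR^{(i)}\|$ then yields
\begin{align*}
  \|\bR_{k+1}^{(i)}\| \leq g_k^0\!\left(\tfrac{\sigma_1}{e} + g_k^1 \max_{1 \leq j \leq k}\|C_j(\bLam^{(i)})\|\right)\|\bR^{(i)}\|,
\end{align*}
so $f_{k+1}$ is simply the parenthesised constant, which depends only on $k$, on the spectrum bounds used in defining the $a_k, b_k, c_k, \sigma_k$, and on the Ritz values.

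The only real obstacle is a bookkeeping one: \cref{thm:errGen} is stated with a single constant $h_1^{(i)}$ independent of the inner index $k$, whereas here the natural bound on $\be_k^{(i)}$ carries a $k$-dependent factor $\|C_k(\bLam^{(i)})\|$. I would handle this by replacing $\|C_k(\bLam^{(i)})\|$ in the forcing term by its maximum over $1 \leq j \leq k$, which is finite since the Ritz values lie in a bounded interval and $C_k$ is a fixed polynomial; the resulting bound is still of the form $f_k\|\bR^{(i)}\|$ with $f_k$ finite, as claimed. No additional hypotheses on $\bX^{(i)}$ or orthogonality are needed, since every step is an algebraic identity or a submultiplicative norm estimate.
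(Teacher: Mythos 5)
Your proof is correct, but it takes a genuinely different route from the paper. The paper proves \cref{lem:rkbound} directly: it expands $C_k(x)=\sum_j\alpha_j x^j$ in monomials, applies the telescoping identity $\bH^j\bX^{(i)}-\bX^{(i)}{\bLam^{(i)}}^j=\sum_{r=0}^{j-1}\bH^{j-r-1}\bR^{(i)}{\bLam^{(i)}}^r$ term by term, and reads off $f_k=\sum_{j}\sum_{r=0}^{j-1}\norm{\alpha_j\bH^{j-r-1}}\norm{{\bLam^{(i)}}^r}$ by submultiplicativity. You instead derive the three-term recurrence $\bR_{k+1}^{(i)}=a_k\bH\bR_k^{(i)}+b_k\bR_k^{(i)}+c_k\bR_{k-1}^{(i)}+a_k\bR^{(i)}C_k(\bLam^{(i)})$ with $\bR_0^{(i)}=\bzero$, $\bR_1^{(i)}=(\sigma_1/e)\bR^{(i)}$ — which is exactly the exact R-ChFSI recurrence of \cref{eqn:rChFSIrecurrence} specialized to $\bD=\bI$ — and then feed it into \cref{thm:errGen} with $h_0=0$. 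Both derivations are valid algebraic identities followed by norm estimates, and both yield constants that depend on the Ritz values, so neither is more general. What your route buys is coherence with the rest of the paper's machinery (the same error-propagation lemma used in \cref{prf:deltaNChFSI,prf:dkboundGen}) and constants tied to the recurrence coefficients rather than to the monomial coefficients $\alpha_j$ of the scaled Chebyshev polynomial, which grow rapidly and make the paper's explicit $f_k$ rather loose; what the paper's route buys is a shorter, self-contained argument that does not need the recurrence lemma at all. Your handling of the $k$-dependent forcing term by taking $\max_{1\leq j\leq k}\norm{C_j(\bLam^{(i)})}$ is a legitimate fix (the paper itself is looser on this point in \cref{prf:deltaNChFSI}, where $h_1^{(i)}$ also carries a $k$-dependent factor). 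The only cosmetic slip is the sign: since $a_{low}<c$ one has $\sigma_1<0$, so the initial condition should read $\norm{\bR_1^{(i)}}=\abs{\sigma_1/e}\norm{\bR^{(i)}}$.
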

\begin{proof}
 Let $C_k(x)=\sum_{j=0}^k\alpha_jx^j$, we can now write
 \begin{align*}
  \bR_k^{(i)}&=\sum_{j=1}^k\alpha_j\left(\bH^j{\bX^{(i)}}-{\bX^{(i)}}{\bLam^{(i)}}^j\right)=\sum_{j=1}^k\alpha_j\sum_{r=0}^{j-1}\bH^{j-r-1}\left(\bH{\bX^{(i)}}-{\bX^{(i)}}{\bLam^{(i)}}\right){\bLam^{(i)}}^r
 \end{align*}
 Since $\bLam^{(i)}$ comprises Ritz values of the Hermitian matrix $\bH$, each diagonal entry of $\bLam^{(i)}$ lies in the interval $[\lambda_1,\lambda_m]$ and hence $\norm{\bLam^{(i)}}=\max_j|\epsilon_j^{(i)}|\leq \max\{|\lambda_1|,|\lambda_m|\}\leq\norm{\bH}$.
 Using submultiplicativity of the spectral norm we can now write
 \begin{align*}
  \norm{\bR_k^{(i)}}&\leq\sum_{j=1}^k\sum_{r=0}^{j-1}|\alpha_j|\norm{\bH}^{j-r-1}\norm{\bH{\bX^{(i)}}-{\bX^{(i)}}{\bLam^{(i)}}}\norm{\bLam^{(i)}}^r\leq\sum_{j=1}^k\sum_{r=0}^{j-1}|\alpha_j|\norm{\bH}^{j-1}\norm{\bR^{(i)}}=f_k\norm{\bR^{(i)}}
 \end{align*}
 where we have defined $f_k=\sum_{j=1}^k j|\alpha_j|\norm{\bH}^{j-1}$.
\end{proof}
\begin{lemma}
 The norm of the residual 
 $\normalfont\bR^{(i)}$ satisfies $  \normalfont\norm{\bR^{(i)}}\leq 2\norm{\bH}\sin{\angle(\mathcal{S}^{(i)},\mathcal{S})}$\label{lem:TwoNormBound}
\end{lemma}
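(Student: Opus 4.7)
The plan is to transport the problem into the standard Hermitian framework already used in \cref{thm:accurateConv} by multiplying through with $\bB^{1/2}$. Setting $\hat{\bR}^{(i)}=\bB^{1/2}\bR^{(i)}$ and $\hat{\bX}^{(i)}=\bB^{1/2}\bX^{(i)}$, we get $\hat{\bR}^{(i)}=\hat{\bH}\hat{\bX}^{(i)}-\hat{\bX}^{(i)}\bLam^{(i)}$ with $\hat{\bX}^{(i)}$ orthonormal, since $\bX^{(i)}$ is $\bB$-orthonormal after the Rayleigh--Ritz step. The Rayleigh--Ritz identities $\bX^{(i)\dagger}\bA\bX^{(i)}=\bLam^{(i)}$ and $\bX^{(i)\dagger}\bB\bX^{(i)}=\bI$ yield the orthogonality $\hat{\bX}^{(i)\dagger}\hat{\bR}^{(i)}=\bzero$, so the residual lies entirely in $\mathcal{S}^{(i)\perp}$ and can be written as $\hat{\bR}^{(i)}=\bQ\hat{\bH}\hat{\bX}^{(i)}$, where $\bQ=\bI-\hat{\bX}^{(i)}\hat{\bX}^{(i)\dagger}$ is the orthogonal projector onto $\mathcal{S}^{(i)\perp}$.

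The next step is to resolve $\hat{\bH}\hat{\bX}^{(i)}$ using the spectral projectors $\bP_j=\hat{\bU}_j\hat{\bU}_j^\dagger$ of $\hat{\bH}$. Since $\hat{\bH}$ commutes with each $\bP_j$ and acts as $\bLam_j$ on $\mathcal{R}(\hat{\bU}_j)$, this produces
\begin{equation*}
\hat{\bR}^{(i)}=\bQ\hat{\bU}_1\bLam_1\hat{\bZ}_1+\bQ\hat{\bU}_2\bLam_2\hat{\bZ}_2,\qquad \hat{\bZ}_j=\hat{\bU}_j^\dagger\hat{\bX}^{(i)}.
\end{equation*}
Applying the triangle inequality and sub-multiplicativity of the spectral norm then produces two terms of the form $\norm{\bQ\hat{\bU}_j}\,\norm{\bLam_j}\,\norm{\hat{\bZ}_j}$ that must be bounded individually.

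Each factor is controlled directly from \cref{def:angle}. By definition, $\norm{\hat{\bZ}_2}=\norm{\hat{\bU}_2^\dagger\hat{\bX}^{(i)}}=\sin\angle(\mathcal{S}^{(i)},\mathcal{S})$. Writing $\bQ=\hat{\bX}_\perp\hat{\bX}_\perp^\dagger$ for the orthonormal complement $\hat{\bX}_\perp$ of $\hat{\bX}^{(i)}$, one has $\norm{\bQ\hat{\bU}_1}=\norm{\hat{\bX}_\perp\hat{\bX}_\perp^\dagger\hat{\bU}_1}=\norm{\hat{\bX}_\perp^\dagger\hat{\bU}_1}$, which by the symmetry built into \cref{def:angle} also equals $\sin\angle(\mathcal{S}^{(i)},\mathcal{S})$. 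The remaining factors satisfy $\norm{\hat{\bZ}_1}\le 1$ and $\norm{\bQ\hat{\bU}_2}\le 1$, being products of orthogonal projections with orthonormal matrices (equivalently, cosines of the corresponding principal angles), while $\norm{\bLam_j}\le\norm{\hat{\bH}}$ since each $\bLam_j$ is diagonal with a subset of the eigenvalues of the Hermitian matrix $\hat{\bH}$. Summing the two contributions gives $\norm{\hat{\bR}^{(i)}}\le 2\norm{\hat{\bH}}\sin\angle(\mathcal{S}^{(i)},\mathcal{S})$, which yields the stated inequality once $\norm{\bR^{(i)}}$ and $\norm{\bH}$ are interpreted through the $\bB^{1/2}$-equivalence used throughout \cref{sec:background}, so that $\norm{\bH}$ is understood as the spectral radius of $\bB^{-1}\bA$ (equivalently, the operator norm of $\bH$ in the $\bB$-inner product).

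The principal obstacle is the identification $\norm{\bQ\hat{\bU}_1}=\sin\angle(\mathcal{S}^{(i)},\mathcal{S})$, which is not immediate from reading \cref{def:angle} literally and depends on (a) the equivalence of the two expressions $\norm{\bX_\perp^\dagger\bY}=\norm{\bY_\perp^\dagger\bX}$ stated there, and (b) the fact that multiplying by an orthonormal matrix on the left preserves spectral norm, so that $\bQ\hat{\bU}_1$ has the same singular values as $\hat{\bX}_\perp^\dagger\hat{\bU}_1$. Once this identification is made, the rest of the argument is a routine triangle-inequality bound, and no novel machinery beyond \cref{def:angle} and the spectral decomposition of $\hat{\bH}$ is required.
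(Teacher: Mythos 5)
Your proof is correct and follows essentially the same route as the paper's: express the residual as the orthogonal projection $(\bI-\hat{\bX}^{(i)}\hat{\bX}^{(i)\dagger})\hat{\bH}\hat{\bX}^{(i)}$ via the Rayleigh--Ritz identity, split $\hat{\bH}$ through its spectral projectors onto $\mathcal{R}(\hat{\bU}_1)$ and $\mathcal{R}(\hat{\bU}_2)$, and bound the two terms using $\norm{\bQ\hat{\bU}_1}=\norm{\hat{\bU}_2^\dagger\hat{\bX}^{(i)}}=\sin\angle(\mathcal{S}^{(i)},\mathcal{S})$ together with the trivial bounds on the remaining factors. The only difference is that you make the $\bB^{1/2}$ congruence explicit where the paper leaves it implicit, which is a point in your favour rather than a deviation.
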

\begin{proof}
 We have, $\bR^{(i)}=\bH\bX^{(i)}-\bX^{(i)}\bLam^{(i)}$ and the subspace diagonalization performed in the Rayleigh-Ritz projection step associated with $(i-1)^{th}$ iteration allows one to write $\bLam^{(i)}$ as $\bLam^{(i)}={\bX^{(i)}}^\dagger\bH\bX^{(i)}$. Subsequently we can write the residual $\bR^{(i)}=(\bI-\bX^{(i)}{\bX^{(i)}}^\dagger)\bH\bX^{(i)}$. We note that $\bH=\hat{\bU}_1\bLam_1\hat{\bU}_1^\dagger+\hat{\bU}_2\bLam_2\hat{\bU}_2^\dagger$ and consequently we can write
 \begin{multline*}
  \norm{\bR^{(i)}}=\norm{(\bI-\bX^{(i)}{\bX^{(i)}}^\dagger)\bH\bX^{(i)}}=\norm{(\bI-\bX^{(i)}{\bX^{(i)}}^\dagger)(\hat{\bU}_1\bLam_1\hat{\bU}_1^\dagger+\hat{\bU}_2\bLam_2\hat{\bU}_2^\dagger)\bX^{(i)}}\\
  \leq\norm{(\bI-\bX^{(i)}{\bX^{(i)}}^\dagger)\hat{\bU}_1}\norm{\bLam_1}\norm{\hat{\bU}_1^\dagger\bX^{(i)}}+\norm{(\bI-\bX^{(i)}{\bX^{(i)}}^\dagger)\hat{\bU}_2}\norm{\bLam_2}\norm{\hat{\bU}_2^\dagger\bX^{(i)}}
 \end{multline*}
 Using the fact that $\sin{\angle(\mathcal{S}^{(i)},\mathcal{S})}={\norm{(\bI-\bX^{(i)}{\bX^{(i)}}^\dagger)\hat{\bU}_1} = \norm{\hat{\bU}_2^\dagger\bX^{(i)}}}$ and the inequalities $\norm{\hat{\bU}_1^\dagger\bX^{(i)}}\leq1$, $\norm{(\bI-\bX^{(i)}{\bX^{(i)}}^\dagger)\hat{\bU}_2}\leq1$ we can write
 \begin{align}
  \norm{\bR^{(i)}}&\leq\sin{\angle(\mathcal{S}^{(i)},\mathcal{S})}(\norm{\bLam_1}+\norm{\bLam_2})
\leq2\norm{\bH}\sin{\angle(\mathcal{S}^{(i)},\mathcal{S})}
 \end{align}
\end{proof}

\section{Proof of \texorpdfstring{\Cref{thm:deltaNChFSI}}{Theorem 3.2}}\label{prf:deltaNChFSI}
Using the exact recurrence relation \cref{eqn:ChFSIrecurrence} and the recurrence relation employing inexact matrix products \cref{eqn:ChFSIrecurrenceSEPMP} we can now write a recurrence relation for $\bDelta_k^{(i)}=\underline{\bY}_k^{(i)}-\bY_k^{(i)}$ as
\begin{align}
\bDelta_{k+1}^{(i)}=a_k\bH\bDelta_{k}^{(i)}+b_k\bDelta_{k}^{(i)}+c_k\bDelta_{k-1}^{(i)}+a_k\be_k^{(i)} \label{eqn:DeltaChFSI}
\end{align}
with the initial conditions $\bDelta_0^{(i)}=0$ and $\bDelta_1^{(i)}=\frac{\sigma_1}{e}(\bD^{-1}-\bB^{-1})\bA\bX^{(i)}$, obtained from the initial conditions of \cref{eqn:ChFSIrecurrence} and \cref{eqn:ChFSIrecurrenceSEPMP}. While $\be_k^{(i)}=\bD^{-1}\otimes\bA\otimes\underline{\bY}_k^{(i)}-\bH\underline{\bY}_k^{(i)}$.
From \cref{lem:FPErrorMatmul} we have $\norm{\be_k^{(i)}}=\norm{{\bD^{-1}\otimes\bA\otimes\underline{\bY}_k^{(i)}}-\bH\underline{\bY}_k^{(i)}}\leq(2\gamma_m\norm{\bD}^{-1}+\zeta)\norm{\bA}\norm{\underline{\bY}_k^{(i)}}$. We further note that 
\begin{align*}
 \norm{\underline{\bY}_k^{(i)}}=\norm{\bDelta_k^{(i)}+\bY_k^{(i)}}\leq\norm{\bDelta_k^{(i)}}+\norm{\bY_k^{(i)}}\leq\norm{\bDelta_k^{(i)}}+\norm{C_k(\bH)\bX^{(i)}}\leq\norm{\bDelta_k^{(i)}}+\norm{C_k(\bH)} 
\end{align*}
Thus we have 
$\norm{\be_k^{(i)}}\leq(2\gamma_m\norm{\bD}^{-1}+\zeta)\norm{\bA}\norm{\bDelta_k^{(i)}}+(2\gamma_m\norm{\bD}^{-1}+\zeta)\norm{\bA}\norm{C_k(\bH)}$. Further, since $\hat{\bX}^{(i)}=\bB^{\frac{1}{2}}\bX^{(i)}$ is orthonormal, we have $\norm{\bX^{(i)}}\leq\norm{\bB^{-\frac{1}{2}}}$. Using \cref{thm:errGen} with $\norm{\bDelta_0^{(i)}}=0$ and $\norm{\bDelta_1^{(i)}}\leq\abs{\frac{\sigma_1}{e}}\zeta\norm{\bA}\norm{\bB^{-\frac{1}{2}}}$, $h_0=(2\gamma_m\norm{\bD}^{-1}+\zeta)\norm{\bA}$ and $h_1^{(i)}=(2\gamma_m\norm{\bD}^{-1}+\zeta)\norm{\bA}\norm{C_k(\bH)}$ we obtain $\norm{\hat\bDelta_k^{(i)}}\leq\gamma_m\eta_k+\zeta\tilde{\eta}_k$ with $\eta_k=2\norm{\bB^{\frac{1}{2}}}\norm{\bD^{-1}}\norm{\bA}\norm{C_k(\bH)}g_{k-1}^1$ and $\tilde{\eta}_k=\norm{\bB^{\frac{1}{2}}}(\norm{\bA}\norm{C_k(\bH)}g_{k-1}^1+g_{k-1}^0\abs{\frac{\sigma_1}{e}}\norm{\bA}\norm{\bB^{-\frac{1}{2}}})$.
\section{Proof of \texorpdfstring{\Cref{thm:dkboundGen}}{Theorem 3.4}}\label{prf:dkboundGen}
Using the exact reformulated recurrence relation \cref{eqn:rChFSIrecurrence} and the reformulated recurrence relation employing inexact matrix products \cref{eqn:rChFSIrecurrenceMP} we can now write a recurrence relation for the error as $\bDelta_k^{(i)}=\underline{\bY}_k^{(i)}-\bY_k^{(i)}=\bD^{-1}(\underline{\bZ}_k^{(i)}-\bZ_k^{(i)})$ as
\begin{align*}
\bDelta_{k+1}^{(i)}=a_k\bH\bDelta_{k}^{(i)}+b_k\bDelta_{k}^{(i)}+c_k\bDelta_{k-1}^{(i)}\!+\!a_k\be_k^{(i)}
\end{align*}
with $\be_k^{(i)}=\left(\bD^{-1}-\bB^{-1}\right)\bB\bR^{(i)}\bLam_k^{(i)}+\bD^{-1}\bA\otimes\bD^{-1}\otimes\underline{\bZ}_k^{(i)}-\bB^{-1}\bA\bD^{-1}\underline{\bZ}_k^{(i)}$ and the initial conditions $\bDelta_0^{(i)}=0$ and $\bDelta_1^{(i)}=\frac{\sigma_1}{e}(\bD^{-1}-\bB^{-1})\bB\bR^{(i)}$.
We note that this recurrence relation is of the same form as that of \cref{eqn:errRecGen} in \cref{thm:errGen} with
\begin{align*}
 \norm{\be_k^{(i)}}\leq\zeta(\norm{\bB}\norm{\bR^{(i)}}\norm{\bLam_k^{(i)}}+\norm{\bA}\norm{\bD^{-1}\underline{\bZ}_k^{(i)}})+2\gamma_m\norm{\bA}\norm{\bD^{-1}}^2\norm{\bD}\norm{\bD^{-1}\underline{\bZ}_k^{(i)}}
\end{align*}
 Further, since $\bLam_k^{(i)}=C_k(\bLam^{(i)})$ and $\norm{\bLam^{(i)}}\leq\norm{\bH}$, we have $\norm{\bLam_k^{(i)}}\leq \max_{|x|\leq\norm{\bH}}|C_k(x)|=:M_k$.
 We further note that using \cref{lem:rkbound} we can write
 \begin{equation*}
 \norm{\bD^{-1}\underline{\bZ}_k^{(i)}}=\norm{\bDelta_k^{(i)}+\bD^{-1}\bZ_k^{(i)}}\leq\norm{\bDelta_k^{(i)}}+f_k\norm{\bD^{-1}\bB}\norm{\bR^{(i)}}
 \end{equation*}
 Thus we have 
 \begin{multline*}
 \norm{\be_k^{(i)}}\leq(\zeta+2\gamma_m\norm{\bD^{-1}}^2\norm{\bD})\norm{\bA}\norm{\bDelta_k^{(i)}}\\+((\zeta+2\gamma_m\norm{\bD^{-1}}^2\norm{\bD}) f_k\norm{\bA}\norm{\bD^{-1}\bB}+\zeta\norm{\bB}M_k)\norm{\bR^{(i)}}
 \end{multline*}
 Now, using \cref{thm:errGen} with $\norm{\bDelta_0^{(i)}}=0$ and $\norm{\bDelta_1^{(i)}}\!\!=\!\norm{\frac{\sigma_1}{e}(\bD^{-1}-\bB^{-1})\bB\bR^{(i)}}\!\!\leq\abs{\frac{\sigma_1}{e}}\zeta\norm{\bB}\norm{\bR^{(i)}}$, $h_0=(\zeta+2\gamma_m\norm{\bD^{-1}}^2\norm{\bD})\norm{\bA}$ and $h_1^{(i)}=((\zeta+2\gamma_m\norm{\bD^{-1}}^2\norm{\bD}) f_k\norm{\bA}\norm{\bD^{-1}\bB}+\zeta\norm{\bB}M_k)\norm{\bR^{(i)}}$ we obtain $\norm{\hat\bDelta_k^{(i)}}\leq(\gamma_m\eta_k+\zeta\tilde{\eta}_k)\norm{\bR^{(i)}}$ with $\eta_k=2f_k\norm{\bB^{\frac{1}{2}}}\norm{\bD^{-1}}^2\norm{\bD}\norm{\bA}\norm{\bD^{-1}\bB}g_{k-1}^1g_{k-1}^0$ and $\tilde{\eta}_k=\norm{\bB^{\frac{1}{2}}}(\abs{\frac{\sigma_1}{e}}\norm{\bB}g_{k-1}^0+(f_k\norm{\bA}\norm{\bD^{-1}\bB}+\norm{\bB}M_k)g_{k-1}^1g_{k-1}^0)
 $.
\newpage
\bibliographystyle{elsarticle-num-names} 
\bibliography{references.bib}
\end{document}